\definecolor{rltblue}{rgb}{0,0,0.75}
\def\ps@headings{%
\def\@oddhead{\mbox{}\scriptsize\rightmark \hfil \thepage}%
\def\@evenhead{\scriptsize\thepage \hfil \leftmark\mbox{}}%
\def\@oddfoot{}%
\def\@evenfoot{}}
\makeatother \pagestyle{headings}
\newtheorem{theorem}{\textbf{Theorem}}
\newtheorem{proposition}[theorem]{\textbf{Proposition}}
\newtheorem{corollary}[theorem]{\textbf{Corollary}}
\newtheorem{remark}[theorem]{\textbf{Remark}}
\newtheorem{example}[theorem]{\textbf{Example}}
\newtheorem{lemma}[theorem]{\textbf{Lemma}}
\DeclareMathOperator{\tr}{tr}
\DeclareMathOperator{\swt}{swt}
\DeclareMathOperator{\wt}{wt}
\DeclareMathOperator{\spann}{span}
\newcommand{\C}{\mathbf{C}}
\newcommand{\F}{\mathbf{F}}
\newcommand{\Z}{\mathbf{Z}}
\newcommand{\ket}[1]{|#1\rangle}
\newcommand{\scal}[2]{\langle #1\mid #2\rangle_s}
\newcommand{\sdual}{{\perp_s}}
\newcommand{\hdual}{{\perp_h}}
\newcommand{\nix}[1]{}
\begin{document}
\title{Constructions  of Subsystem Codes\\ over Finite Fields}
\author{
Salah A. Aly and Andreas Klappenecker\\
Department  of Computer Science \\
Texas A\&M University, College Station, TX 77843-3112, USA\\ Emails: \{salah,klappi\}@cs.tamu.edu }
\markboth{Aly, Klappenecker:  Constructions  of Subsystem Codes over Finite Fields, 2008.}{Aly, Klappenecker: \MakeLowercase{\textit{}} Constructions  of Subsystem Codes over Finite Fields,  2008.}
\maketitle

\begin{abstract}
Subsystem codes protect quantum information by encoding it in a tensor
factor of a subspace of the physical state space. Subsystem codes
generalize all major quantum error protection schemes, and therefore
are especially versatile. This paper introduces numerous
constructions of subsystem codes. It is shown how one can derive
subsystem codes from classical cyclic codes. Methods to trade the
dimensions of subsystem and co-subystem are introduced that maintain
or improve the minimum distance. As a consequence, many optimal
subsystem codes are obtained.  Furthermore, it is shown how given
subsystem codes can be extended, shortened, or combined to yield new
subsystem codes.  These subsystem code constructions are used to
derive tables of upper and lower bounds on the subsystem code
parameters.
\end{abstract}


\section{Introduction}

Quantum information processing as a growing exciting field has
attracted researchers from different disciplines.  It utilizes the
laws of quantum mechanical operations to perform exponentially
speedy computations. In an open system, one might wonder how to
perform such computations in the presence of decoherence and noise
that disturb quantum states storing quantum information. Ultimately,
the goals of quantum error-correcting codes are to protect quantum
states and to allow recovery of quantum information processed in
computational operations of a quantum computer. Henceforth, one
seeks to design good quantum codes that can be efficiently utilized
for these goals.

\bigskip

A well-known approach to derive quantum error-correcting codes from
 self-orthogonal (or dual-containing) classical codes is called
stabilizer codes, which were introduced a decade ago. The stabilizer
codes inherit some properties of clifford group theory, i.e., they
are stabilized by  abelian finite groups.  In the seminal paper by
Calderbank~\emph{at.
et}~\cite{calderbank98,rains99,shor95}, various
methods of stabilizer code constructions are given, along with their
propagation rules and tables of upper bounds on their parameters. In
a similar tactic, we also present subsystem code structures by
establishing several methods to derive them easily from classical
codes.   Subsystem codes inherit their name from the fact that the
quantum codes are decomposed into two systems as explained in
Section~\ref{sec:background}. The classes of subsystem codes that we
will derive are superior because they can be encoded and decoded
using linear shirt-register operations. In addition, some of these classes turned out to be optimal and MDS codes.

Subsystem codes as we prefer to call them were mentioned in the
unpublished work by Knill~\cite{knill96b,knill06}, in which he
attempted to generalize the theory of quantum error-correcting codes
into subsystem codes. Such codes with their stabilizer formalism
were reintroduced
recently~\cite{bacon06,pre0608,kribs05,kribs05b,poulin05}. An
$((n,K,R,d))_q$ subsystem code is a $KR$-dimensional subspace $Q$ of
$\C^{q^n}$ that is decomposed into a tensor product $Q=A\otimes B$
of a $K$-dimensional vector space $A$ and an $R$-dimensional vector
space $B$ such that all errors of weight less than~$d$ can be
detected by~$A$. The vector spaces $A$ and $B$ are respectively
called the subsystem $A$ and the co-subsystem $B$. For some
background on subsystem codes  see the next section.

\bigskip

 This paper is structured as follows.  In
section~\ref{sec:background}, we present a brief background on
subsystem code structures and present the Euclidean and Hermitian
constructions. In section~\ref{sec:CyclicSubsys}, we derive cyclic
subsystem codes and provide two generic methods of their
constructions from classical cyclic codes. Consequently in
section~\ref{sec:dimensions}, we construct families of subsystem BCH
and RS codes from classical BCH and RS over $\F_q$ and $\F_{q^2}$
defined using their defining sets. In Sections~\ref{sec:MDSsubsys},\ref{sec:extendshortensubsys},\ref{sec:combinesubsys},   we establish various methods of
subsystem code constructions by extending and shortening the code
lengths and combining pairs of known codes, in addition, tables of
upper bounds on subsystem code parameters are given. Finally, the
paper is concluded with a discussion and future research directions
in section~\ref{sec:conclusion}.

\medskip

\noindent {\em Notation.} If $S$ is a set, then $|S|$ denotes the
cardinality of the set $S$. Let $q$ be a power of a prime integer
$p$. We denote by $\F_q$ the finite field with $q$ elements. We use
the notation $(x|y)=(x_1,\dots,x_n|y_1,\dots,y_n)$ to denote the
concatenation of two vectors $x$ and $y$ in $\F_q^n$. The symplectic
weight of $(x|y)\in \F_q^{2n}$ is defined as
$$\swt(x|y)=\{(x_i,y_i)\neq (0,0)\,|\, 1\le i\le n\}.$$ We define
$\swt(X)=\min\{\swt(x)\,|\, x\in X, x\neq 0\}$ for any nonempty
subset $X\neq \{0\}$ of $\F_q^{2n}$.

The trace-symplectic product of two vectors $u=(a|b)$ and
$v=(a'|b')$ in $\F_q^{2n}$ is defined as
$$\langle u|v \rangle_s = \tr_{q/p}(a'\cdot b-a\cdot b'),$$ where
$x\cdot y$ denotes the dot product and $\tr_{q/p}$ denotes the trace
from $\F_q$ to the subfield $\F_p$.  The trace-symplectic dual of a
code $C\subseteq \F_q^{2n}$ is defined as $$C^\sdual=\{ v\in
\F_q^{2n}\mid \langle v|w \rangle_s =0 \mbox{ for all } w\in C\}.$$
We define the Euclidean inner product $\langle x|y\rangle
=\sum_{i=1}^nx_iy_i$ and the Euclidean dual of $C\subseteq \F_{q}^n$
as $$C^\perp = \{x\in \F_{q}^n\mid \langle x|y \rangle=0 \mbox{ for
all } y\in C \}.$$ We also define the Hermitian inner product for
vectors $x,y$ in $\F_{q^2}^n$ as $\langle x|y\rangle_h
=\sum_{i=1}^nx_i^qy_i$ and the Hermitian dual of $C\subseteq
\F_{q^2}^n$ as
$$C^\hdual= \{x\in \F_{q^2}^n\mid \langle x|y \rangle_h=0 \mbox{ for all } y\in
C \}.$$

\section{Background on Subsystem Codes}\label{sec:background}

In this section we give a quick overview of subsystem codes.  We
assume that the reader is familiar the theory of stabilizer codes over
finite fields, see~\cite{calderbank98,ketkar06,rains99} and the
references therein.

\subsection{Errors}
Let $\F_q$ denote a finite field with $q$ elements of
characteristic~$p$. Let $\{\ket{x}\mid x \in \F_q\}$ be a fixed
orthonormal basis of $\C^q$ with respect to the standard hermitian
inner product, called the computational basis.  For $a,b \in \F_q$, we
define the unitary operators $X(a)$ and $Z(b)$ on $\C^q$ by
$$ X(a)\ket{x}=\ket{x+a},\qquad Z(b)\ket{x}=\omega^{\tr(bx)}\ket{x},$$
where $\omega=\exp(2\pi i/p)$ is a primitive $p$th root of unity and
$\tr$ is the trace operation from $\F_q$ to $\F_p$. The set $E = \{
X(a)Z(b)\,|\, a,b\in \F_q\}$ forms an orthogonal basis of the
operators acting on $\C^q$ with respect to the trace inner product,
called the error basis.

The state space of $n$ quantum digits (or qudits) is given by
$\C^{q^n}=\C^q \otimes \C^q \otimes \cdots \otimes \C^q$. An error
basis $\mathbf{E}$ on $\C^{q^n}$ is obtained by tensoring $n$
operators in $E$; more explicitly,
$\textbf{E}=\{X(\mathbf{a})Z(\mathbf{b})\mid
\mathbf{a, b} \in \F_q^n\},$
where
$$
\begin{array}{lcl}
X(\mathbf{a}) &=&X(a_1)\otimes\cdots\otimes X(a_n),\\
Z(\mathbf{b}) &=&Z(b_1)\otimes\cdots\otimes Z(b_n)
\end{array}
$$ for $\mathbf{a}=(a_1,\dots,a_n)\in \F_q^n$ and
$\mathbf{b}=(b_1,\dots,b_n)\in \F_q^n$.  The set $E$ is not closed
under multiplication, whence it is not a group.
The group $\mathbf{G}$ generated by $\mathbf{E}$ is given by
$$ \mathbf{G} = \{
\omega^{c}\textbf{E}=\omega^{c}X(\mathbf{a})Z(\mathbf{b})\,|\,
\mathbf{a, b} \in \F_q^n, c\in \F_p\},$$ and $\mathbf{G}$ is called
the error group of $\C^{q^n}$. The error group is an extraspecial
$p$-group.  The weight of an error in $\mathbf{G}$ is given by the
number of nonidentity tensor components; hence, the weight of
$\omega^cX(\mathbf{a})Z(\mathbf{b})$ is given by the symplectic weight
$\swt(\mathbf{a}|\mathbf{b})$.

\subsection{Subsystem Codes}
An $((n,K,R,d))_q$ subsystem code is a subspace $Q=A\otimes B$ of
$\C^{q^n}$ that is decomposed into a tensor product of two vector
spaces $A$ and $B$ of dimension $\dim A=K$ and $\dim B=R$ such that
all errors in $\mathbf{G}$ of weight less than $d$ can be detected by
$A$. We call $A$ the subsystem and $B$ the co-subsystem.  The
information is exclusively encoded in the subsystem $A$. This yields
the attractive feature that errors affecting co-subsystem $B$ alone
can be ignored.

A particularly fruitful way to construct subsystem codes proceeds by
choosing a normal subgroup $N$ of the error group~$\mathbf{G}$, and
this choice determines the dimensions of subsystem and co-subsystem as
well as the error detection and correction capabilities of the
subsystem code, see~\cite{pre0608}. One can relate the normal subgroup
$N$ to a classical code, namely $N$ modulo the intersection of $N$
with the center $Z(\mathbf{G})$ of $\mathbf{G}$ yields the classical
code $X=N/(N\cap Z(\mathbf{G}))$. This generalizes the familiar case of
stabilizer codes, where $N$ is an abelian normal subgroup.  It is
remarkable that in the case of subsystem codes \textit{any} classical
additive code $X$ can occur. It is most convenient that one can also start
with any classical additive code and obtain a subsystem code, as is
detailed in the following theorem from \cite{pre0608}:

\begin{theorem}\label{th:oqecfq}
Let $C$ be a classical additive subcode of\/ $\F_q^{2n}$ such that
$C\neq \{0\}$ and let $D$ denote its subcode $D=C\cap C^\sdual$. If
$x=|C|$ and $y=|D|$, then there exists a subsystem code $Q= A\otimes
B$ such that
\begin{compactenum}[i)]
\item $\dim A = q^n/(xy)^{1/2}$,
\item $\dim B = (x/y)^{1/2}$.
\end{compactenum}
The minimum distance of subsystem $A$ is given by
\begin{compactenum}[(a)]
\item $d=\swt((C+C^\sdual)-C)=\swt(D^\sdual-C)$ if $D^\sdual\neq C$;
\item $d=\swt(D^\sdual)$ if $D^\sdual=C$.
\end{compactenum}
Thus, the subsystem $A$ can detect all errors in $E$ of weight less
than $d$, and can correct all errors in $E$ of weight $\le \lfloor
(d-1)/2\rfloor$.
\end{theorem}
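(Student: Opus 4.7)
The plan is to translate the classical additive code $C$ into a subgroup $N$ of the error group $\mathbf{G}$ and then apply the representation theory of finite nilpotent $p$-groups of class at most two to obtain a tensor decomposition of $\C^{q^n}$ that reveals the subsystem structure.

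First, I would use the natural surjection $\mathbf{G}\to \F_q^{2n}$ sending $\omega^c X(\mathbf{a})Z(\mathbf{b})$ to $(\mathbf{a}|\mathbf{b})$; its kernel is $Z(\mathbf{G})$, and it converts the group commutator into the trace-symplectic form, so two elements of $\mathbf{G}$ commute if and only if their images in $\F_q^{2n}$ are trace-symplectic orthogonal. Pulling $C$ back along this map yields a subgroup $N\leq \mathbf{G}$ of order $xp$ whose center is the preimage of $D=C\cap C^\sdual$ and hence has order $yp$. In particular $N$ is a nilpotent $p$-group of class at most two, and the quotient $N/Z(N)$ carries a non-degenerate alternating form induced by the commutator, which forces $x/y$ to be a perfect square.

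Next, by Heisenberg-type structure theory, every faithful irreducible representation of $N$ that restricts to the fixed faithful character of $Z(\mathbf{G})$ has dimension $\sqrt{x/y}$, and any two such representations are equivalent. Restricting the natural representation of $\mathbf{G}$ on $\C^{q^n}$ to $N$ and isolating the $Z(\mathbf{G})$-isotypic component corresponding to this character, one sees that $N$ acts on it as a multiple of its unique faithful irreducible. This supplies a canonical factorization $Q=A\otimes B$ in which $B$ carries the irreducible of $N$ and $A$ is the multiplicity space, so $\dim B=\sqrt{x/y}=(x/y)^{1/2}$ and $\dim A = q^n/\sqrt{xy}=q^n/(xy)^{1/2}$.

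For the minimum distance, an error $e\in E$ commutes with every element of $N$ modulo $Z(\mathbf{G})$ precisely when its image in $\F_q^{2n}$ lies in $D^\sdual$; such $e$ therefore act trivially on the subsystem $A$. Among these, errors whose image lies in $C$ already belong to the gauge group and act only on the co-subsystem $B$, hence are harmless. Thus the undetectable errors that do something non-trivial on $A$ are exactly those with image in $D^\sdual\setminus C$ when $D^\sdual\neq C$, giving $d=\swt(D^\sdual - C)$, while if $D^\sdual = C$ every non-zero element of $D^\sdual$ acts non-trivially on $A$, yielding $d=\swt(D^\sdual)$. The main obstacle is the canonical tensor factorization in step two: one must show that the decomposition $A\otimes B$ is intrinsic rather than a choice of basis, which amounts to invoking a Stone--von Neumann-type uniqueness result for the faithful irreducible of the finite Heisenberg group $N/\ker$. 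Once this decomposition is in hand, both the dimension formulas and the error-detection condition reduce to counting arguments in $C$, $D$, and $D^\sdual$.
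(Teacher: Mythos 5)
Your overall route --- pull $C$ back to a subgroup $N$ of the error group $\mathbf{G}$, use the commutator/trace-symplectic correspondence, and extract $Q=A\otimes B$ from the representation theory of the two-step nilpotent group $N$ --- is indeed the route of the reference the paper cites for this theorem (the paper itself offers no argument beyond ``see \cite[Theorem~5]{pre0608}''). However, two of your steps fail as stated. The key one is the uniqueness claim in your second step: the irreducible representations of $N$ that restrict to the fixed faithful character of $Z(\mathbf{G})$ are \emph{not} all equivalent. They are classified by the characters of $Z(N)/Z(\mathbf{G})\cong D$, so there are exactly $y=|D|$ inequivalent ones, each of dimension $(x/y)^{1/2}$. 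Consequently the $Z(\mathbf{G})$-isotypic component you propose to isolate is all of $\C^{q^n}$ (the center acts by that character on the whole space), and it is a multiple of a single irreducible only when $y=1$; running your multiplicity count there would give $\dim A=q^n/(x/y)^{1/2}$, not $q^n/(xy)^{1/2}$. The repair is the actual Clifford-code construction: fix one irreducible character $\chi$ of $N$ lying over the faithful central character (equivalently, a character of $Z(N)$, the preimage of $D$, extending it), let $Q$ be the $\chi$-isotypic component, of dimension $q^n/y$; Stone--von Neumann-type uniqueness applies only after the character on all of $Z(N)$ --- not just on $Z(\mathbf{G})$ --- has been fixed, and then $B$ carries the irreducible of dimension $(x/y)^{1/2}$ and $A$ its multiplicity space of dimension $q^n/(xy)^{1/2}$.

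The distance argument also has the commutation criterion backwards. An error commutes with \emph{every} element of $N$ precisely when its image lies in $C^\sdual$; it is commutation with the stabilizer, i.e.\ with the preimage $Z(N)$ of $D$, that corresponds to the image lying in $D^\sdual$. Moreover, elements with image in $D^\sdual\setminus C$ do \emph{not} act trivially on $A$: they preserve $Q$ and commute with the stabilizer but act nontrivially on the subsystem $A$, which is exactly why they are the undetectable errors, whereas the errors acting trivially on $A$ are those with image in $C$ (the gauge group). Your sketch asserts the former act trivially on $A$ and then, two sentences later, uses the correct dichotomy, so the argument as written is inconsistent. You are also missing the converse half: an error whose image lies outside $D^\sdual$ anticommutes with some stabilizer element and hence maps $Q$ into an orthogonal isotypic component (detectable syndrome), which is what makes every error of weight less than $\swt(D^\sdual-C)$ detectable. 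Finally, in the case $D^\sdual=C$ one has $\dim A=1$, so ``every non-zero element of $D^\sdual$ acts non-trivially on $A$'' cannot be the justification; there $d=\swt(D^\sdual)$ is the purity convention for one-dimensional subsystems, exactly as for $[[n,0,d]]_q$ stabilizer codes.
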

\begin{proof}
See~\cite[Theorem~5]{pre0608}.
\end{proof}

A subsystem code that is derived with the help of the previous theorem
is called a Clifford subsystem code. We will assume throughout this
paper that all subsystem codes are Clifford subsystem codes. In
particular, this means that the existence of an $((n,K,R,d))_q$
subsystem code implies the existence of an additive code $C\le
\F_q^{2n}$ with subcode $D=C\cap C^\sdual$ such that $|C|=q^nR/K$,
$|D|=q^n/(KR)$, and $d=\swt(D^\sdual - C)$.

A subsystem code derived from an additive classical code $C$ is called
pure to $d'$ if there is no element of symplectic weight less than
$d'$ in $C$. A subsystem code is called pure if it is pure to the
minimum distance $d$. We require that an $((n,1,R,d))_q$ subsystem
code must be pure.

We also use the bracket notation $[[n,k,r,d]]_q$ to write the
parameters of an $((n,q^k,q^r,d))_q$ subsystem code in simpler form.
Some authors say that an $[[n,k,r,d]]_q$ subsystem code has $r$
gauge qudits, but this terminology is slightly confusing, as the
co-subsystem typically does not correspond to a state space of $r$
qudits except perhaps in trivial cases. We will avoid this
misleading terminology. An $((n,K,1,d))_q$ subsystem code is also an
$((n,K,d))_q$ stabilizer code and vice versa.

Subsystem codes can be constructed from the classical codes over
$\F_q$ and $\F_{q^2}$. We recall the Euclidean and Hermitian
constructions from~\cite{aly06c}, which are easy consequences of the
previous theorem.

\begin{lemma}[Euclidean Construction]\label{lem:css-Euclidean-subsys}
If $C$ is a $k'$-dimensional $\F_q$-linear code of length $n$ that has
a $k''$-dimensional subcode $D=C\cap C^\perp$ and $k'+k''<n$, then
there exists an
$$[[n,n-(k'+k''),k'-k'',\wt(D^\perp\setminus C)]]_q$$
subsystem code.
\end{lemma}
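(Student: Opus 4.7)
\medskip
\noindent\textbf{Proof proposal.}
The plan is to reduce to Theorem~\ref{th:oqecfq} by feeding it the doubled additive code $\widetilde C = C\times C\subseteq \F_q^{2n}$. Since $C$ is $\F_q$-linear, so is $\widetilde C$, and in particular it is an additive subcode of $\F_q^{2n}$, which is all we need.

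The first step is to compute the trace-symplectic dual of $\widetilde C$. For $(a'|b')\in \F_q^{2n}$, the condition $\langle (a'|b')\mid (a|b)\rangle_s=\tr_{q/p}(a'\cdot b-a\cdot b')=0$ for all $a,b\in C$ decouples: setting $a=0$ forces $\tr_{q/p}(a'\cdot b)=0$ for every $b\in C$, and since $C$ is $\F_q$-linear the trace-orthogonality collapses to ordinary $\F_q$-orthogonality, giving $a'\in C^\perp$; similarly $b'\in C^\perp$. Thus $\widetilde C^\sdual = C^\perp\times C^\perp$, and consequently
\[
\widetilde D := \widetilde C\cap \widetilde C^\sdual = (C\cap C^\perp)\times(C\cap C^\perp)=D\times D.
\]
With $x=|\widetilde C|=q^{2k'}$ and $y=|\widetilde D|=q^{2k''}$, Theorem~\ref{th:oqecfq} produces a subsystem code $Q=A\otimes B$ with $\dim A=q^n/(xy)^{1/2}=q^{n-(k'+k'')}$ and $\dim B=(x/y)^{1/2}=q^{k'-k''}$, matching the claimed parameters.

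The second step is the distance. Applying the same dual computation to $\widetilde D$ gives $\widetilde D^\sdual = D^\perp\times D^\perp$, and I want to show that
\[
\swt(\widetilde D^\sdual\setminus \widetilde C)=\wt(D^\perp\setminus C).
\]
For the upper bound, any minimum-weight $a\in D^\perp\setminus C$ produces $(a|0)\in \widetilde D^\sdual\setminus\widetilde C$ with $\swt(a|0)=\wt(a)$. For the lower bound, any $(a|b)\in \widetilde D^\sdual\setminus\widetilde C$ must have $a\notin C$ or $b\notin C$; since $a,b\in D^\perp$ and $\swt(a|b)\ge \max(\wt(a),\wt(b))$ (a nonzero coordinate of $a$ or $b$ contributes to the symplectic weight), the corresponding coordinate gives $\swt(a|b)\ge \wt(D^\perp\setminus C)$.

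The one subtlety is that Theorem~\ref{th:oqecfq} has two cases depending on whether $\widetilde D^\sdual=\widetilde C$ or not, and the formula $\wt(D^\perp\setminus C)$ only makes sense in the former, nondegenerate, case. But $\widetilde D^\sdual=\widetilde C$ would mean $D^\perp=C$, equivalently $C^\perp\subseteq C$, which forces $k''=n-k'$ and hence $k'+k''=n$; the hypothesis $k'+k''<n$ is precisely the condition that rules this out and places us in case~(a) of Theorem~\ref{th:oqecfq}. I do not anticipate a genuine obstacle here: the only place where care is needed is the trace-to-ordinary-orthogonality reduction for $\F_q$-linear codes, and the matching of the symplectic weight on $D^\perp\times D^\perp$ with the Hamming weight on $D^\perp$.
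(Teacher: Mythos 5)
Your proposal is correct and follows the route the paper intends: the lemma is stated as an easy consequence of Theorem~\ref{th:oqecfq}, obtained by applying it to the doubled code $C\times C\subseteq\F_q^{2n}$, computing $(C\times C)^\sdual=C^\perp\times C^\perp$ via the trace-to-Euclidean reduction for $\F_q$-linear codes, and matching symplectic with Hamming weights — exactly what you do, including the observation that $k'+k''<n$ rules out the degenerate case $D^\perp=C$.
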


\begin{lemma}[Hermitian Construction]\label{lem:css-Hermitina-subsys}
If $C$ is a $k'$-dimensional $\F_{q^2}$-linear code of length $n$ that has a
$k''$-dimensional subcode $D=C\cap C^\hdual$ and $k'+k''<n$, then
there exists an
$$[[n,n-(k'+k''),k'-k'',\wt(D^\hdual \setminus C)]]_q$$
subsystem code.
\end{lemma}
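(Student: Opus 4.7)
The plan is to deduce the Hermitian Construction directly from Theorem~\ref{th:oqecfq} by converting the $\F_{q^2}$-linear code $C$ into an additive code over $\F_q^{2n}$ in such a way that Hermitian duality corresponds to trace-symplectic duality and Hamming weight corresponds to symplectic weight.

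First I would fix an $\F_q$-basis $\{1,\beta\}$ of $\F_{q^2}$ and define the $\F_q$-linear isomorphism $\phi\colon \F_{q^2}^n\to \F_q^{2n}$ by writing each coordinate $x_i=a_i+\beta b_i$ and setting $\phi(x_1,\dots,x_n)=(a_1,\dots,a_n\mid b_1,\dots,b_n)$. The image $\phi(C)$ is then an additive subcode of $\F_q^{2n}$ of cardinality $|C|=q^{2k'}$, since $C$ is $k'$-dimensional over $\F_{q^2}$. The two facts I need about $\phi$ are (i) $\swt(\phi(v))=\wt(v)$ for every $v\in\F_{q^2}^n$, which is immediate because $\phi(x)=(0,0)$ iff $x=0$, and (ii) $\phi(C^\hdual)=\phi(C)^\sdual$. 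Both are standard and are established (for suitable choice of $\beta$) in the Hermitian-to-symplectic reduction used in \cite{aly06c,ketkar06}; I would invoke that result rather than redoing the bilinear-form calculation.

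Given (ii), the intersection behaves well: $\phi(D)=\phi(C\cap C^\hdual)=\phi(C)\cap\phi(C)^\sdual$, so $\phi(D)$ plays the role of the subcode $D$ in Theorem~\ref{th:oqecfq}, with cardinality $q^{2k''}$. Setting $x=|\phi(C)|=q^{2k'}$ and $y=|\phi(D)|=q^{2k''}$, the hypothesis $k'+k''<n$ guarantees $(xy)^{1/2}=q^{k'+k''}<q^n$, so the theorem yields a subsystem code with
\[
\dim A=\frac{q^n}{(xy)^{1/2}}=q^{n-(k'+k'')},\qquad \dim B=\left(\frac{x}{y}\right)^{1/2}=q^{k'-k''}.
\]
These are exactly the $K$ and $R$ required.

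For the minimum distance I would use part~(a) of Theorem~\ref{th:oqecfq}: $d=\swt(\phi(D)^\sdual-\phi(C))$. Applying (ii) to $D$ gives $\phi(D)^\sdual=\phi(D^\hdual)$, so $\phi(D)^\sdual-\phi(C)=\phi(D^\hdual\setminus C)$, and property (i) converts symplectic weight back to Hamming weight, giving $d=\wt(D^\hdual\setminus C)$. (The degenerate case $D^\hdual=C$ is handled identically via part~(b) and collapses to the same formula.) The main obstacle, if any, is verifying the weight-preservation and duality-preservation identities for $\phi$, but since these are exactly the ingredients of the Hermitian construction for stabilizer codes, I would simply cite \cite{aly06c,ketkar06} and note that the present argument is the subsystem-code analogue of that standard reduction.
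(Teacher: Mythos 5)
Your proposal is correct and follows essentially the same route as the paper, which presents this lemma as an easy consequence of Theorem~\ref{th:oqecfq} via the standard basis-expansion map $\F_{q^2}^n\to\F_q^{2n}$ that preserves weights and turns Hermitian duality into trace-symplectic duality (the same reduction carried out in the appendix Lemma~\ref{l:hermitian-linear} and in \cite{aly06c,ketkar06}). The only small imprecision is your parenthetical about the case $D^\hdual=C$: under the hypothesis $k'+k''<n$ one has $C\subsetneq D^\hdual$, so that case cannot occur and only part (a) of Theorem~\ref{th:oqecfq} is ever needed.
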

\section{Cyclic Subsystem Codes}\label{sec:CyclicSubsys}

In this section we shall derive subsystem codes from classical
cyclic codes. We first recall some definitions before embarking on
the construction of subsystem codes.  For further details concerning
cyclic codes see for instance~\cite{huffman03} and
\cite{macwilliams77}.

Let $n$ be a positive integer and $\F_q$ a finite field with $q$
elements such that $\gcd(n,q)=1$. Recall that a linear code
$C\subseteq \F_q^n$ is called \textit{cyclic} if and only if
$(c_0,\dots,c_{n-1})$ in $C$ implies that $(c_{n-1},c_0,\dots,
c_{n-2})$ in $C$.

For $g(x)$ in $\F_q[x]$, we write $(g(x))$ to denote the principal
ideal generated by $g(x)$ in $\F_q[x]$. Let $\pi$ denote the vector
space isomorphism $\pi\colon \F_q^n\rightarrow R_n=\F_q[x] /
(x^n-1)$ given by
$$ \pi((c_0,\dots,c_{n-1})) =
c_0+c_1x+\cdots+c_{n-1}x^{n-1}+(x^n-1).$$ A cyclic code $C\subseteq
\F_q^n$ is mapped to a principal ideal $\pi(C)$ of the ring $R_n$.
For a cyclic code $C$, the unique monic polynomial $g(x)$ in
$\F_q[x]$ of the least degree such that $(g(x))=\pi(C)$ is called
the \textit{generator polynomial} of $C$. If $C\subseteq \F_q^n$ is
a cyclic code with generator polynomial $g(x)$, then
$$\dim_{\F_q} C = n-\deg g(x).$$

Since $\gcd(n,q)=1$, there exists a primitive $n^\text{th}$ root of unity
$\alpha$ over $\F_q$; that is, $\F_q[\alpha]$ is the splitting field
of the polynomial $x^n-1$ over $\F_q$. Let us henceforth fix this
primitive $n^\text{th}$ primitive root of unity $\alpha$.  Since the
generator polynomial $g(x)$ of a cyclic code $C\subseteq \F_q^n$ is
of minimal degree, it follows that $g(x)$ divides the polynomial
$x^n-1$ in $\F_q[x]$.  Therefore, the generator polynomial $g(x)$ of
a cyclic code $C\subseteq \F_q^n$ can be uniquely specified in terms
of a subset $T$ of $\{0,\dots,n-1\}$ such that
$$ g(x) = \prod_{t\in T} (x-\alpha^t).$$ The set $T$ is called the
\textit{defining set} of the cyclic code $C$ (with respect to the
primitive $n^\text{th}$ root of unity $\alpha$). Since $g(x)$ is a polynomial in $\F_q[x]$, a defining set is the
union of cyclotomic cosets $C_x$, where
$$C_x = \{ xq^i\bmod n\,|\, i\in \Z, i\ge 0\}, \quad 0\le x<n$$
The following lemma recalls
some well-known and easily proved facts about defining sets
(see e.g.~\cite{huffman03}).

\begin{lemma}\label{lem:definingsets}
Let $C_i$ be a cyclic code of length $n$ over $\F_q$ with defining set
a $T_i$ for $i=1,2$. Let $N=\{0,1,\dots,n-1\}$ and $T_1^{a}=\{at\bmod
n\,|\, t\in T\}$ for some integer $a$. Then \begin{compactenum}[i)]
\item $C_1\cap C_2$ has defining set $T_1 \cup T_2$.
\item $C_1+C_2$ has defining set $T_1 \cap T_2$.
\item $C_1 \subseteq C_2$ if and only if $T_2 \subseteq T_1$.
\item $C_1^\perp$ has defining set $N \setminus T_1^{-1}$.
\item $C_1^\hdual$ has defining set $N\setminus T_1^{-r}$ provided
that $q=r^2$ for some positive integer $r$.
\end{compactenum}
\end{lemma}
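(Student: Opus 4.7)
The plan is to work throughout in the ring $R_n = \F_q[x]/(x^n-1)$, fixing the primitive $n^{\text{th}}$ root of unity $\alpha$ chosen in the text, and to translate each claim into a statement about roots of generator polynomials. The unifying observation is that $c(x) \in C_i$ if and only if $c(\alpha^t) = 0$ for every $t \in T_i$; i.e., the defining set is exactly the exponent set of powers of $\alpha$ at which every codeword vanishes.

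From this observation, parts (i)--(iii) are essentially bookkeeping. For (i), the condition $c \in C_1 \cap C_2$ is the conjunction of the two vanishing conditions, so the defining set is $T_1 \cup T_2$. For (ii), the sum $C_1 + C_2$ is cyclic with generator $\gcd(g_1(x), g_2(x))$, whose roots are the common roots of $g_1$ and $g_2$, giving defining set $T_1 \cap T_2$. For (iii), $C_1 \subseteq C_2$ is equivalent to $g_2(x) \mid g_1(x)$, which in turn is equivalent to every root of $g_2$ being a root of $g_1$, i.e., $T_2 \subseteq T_1$.

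The main work lies in (iv). I would recall the standard fact that $C_1^\perp$ is again cyclic, with generator polynomial equal (up to normalization) to the reciprocal $h^*(x) = x^{\deg h}\, h(1/x)$ of the check polynomial $h(x) = (x^n - 1)/g_1(x)$. The roots of $h(x)$ are $\{\alpha^j : j \in N \setminus T_1\}$, hence the roots of $h^*(x)$ are the inverses $\{\alpha^{-j} : j \in N \setminus T_1\}$. Using $\alpha^{-j} = \alpha^{n-j}$ and the notation $T_1^{-1} = \{-t \bmod n : t \in T_1\}$, this set of exponents is exactly $N \setminus T_1^{-1}$, yielding the claim.

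Finally, (v) reduces to (iv) by a Frobenius twist. Writing $q = r^2$ and $x^{(r)} = (x_1^r, \ldots, x_n^r)$ for the componentwise $r$-th power map, one checks $\langle x \mid y \rangle_h = \langle x^{(r)} \mid y \rangle$, so $C_1^{\hdual} = \bigl(C_1^{(r)}\bigr)^\perp$. A short computation—raising $c(\alpha^t) = 0$ to the $r$-th power and using $\sum c_i^r \alpha^{tir} = c^{(r)}(\alpha^{tr})$—shows that $C_1^{(r)}$ has defining set $T_1^{\,r} = \{rt \bmod n : t \in T_1\}$; here $\gcd(n, r) = 1$ follows from $\gcd(n, q) = 1$, so multiplication by $r$ is a bijection mod $n$ and the cardinality is preserved. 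Applying (iv) then gives defining set $N \setminus (T_1^{\,r})^{-1} = N \setminus T_1^{-r}$. The main obstacle throughout is the bookkeeping with reciprocals and Frobenius twists; the conceptual content is contained entirely in the correspondence between codewords and their evaluations at powers of $\alpha$.
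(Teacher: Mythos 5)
Your proof is correct and is essentially the standard argument the paper is invoking: the paper gives no proof of this lemma at all, merely citing Huffman--Pless, and your route (evaluation of codewords at powers of $\alpha$, generator/check polynomials and the reciprocal polynomial for the Euclidean dual, a Frobenius twist reducing the Hermitian dual to the Euclidean case, with $\gcd(n,q)=1$ guaranteeing squarefreeness of $x^n-1$ and the bijectivity of multiplication by $r$ mod $n$) is exactly that textbook treatment. One small remark: the identity $\langle x\mid y\rangle_h=\langle x^{(r)}\mid y\rangle$ most directly yields $C_1^\hdual=(C_1^\perp)^{(r)}$; this coincides with your $(C_1^{(r)})^\perp$ after raising the dot-product equation to the $r$-th power, so the conclusion $N\setminus T_1^{-r}$ stands unchanged.
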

\textit{Notation.}  Throughout this section, we denote by $N$ the set
$N=\{0,\dots,n-1\}$.  The cyclotomic coset of $x$ will be denoted by
$C_x$.  If $T$ is a defining set of a cyclic code of length $n$ and
$a$ is an integer, then we denote henceforth by $T^a$ the set
$$T^a = \{ at\bmod n\,|\, t\in T\},$$ as in the previous lemma. We use
a superscript, since this notation will be frequently used in set
differences, and arguably $N\setminus T^{-q}$ is more readable
than $N\setminus -qT$.
\smallskip

Now, we shall give a general construction for subsystem cyclic
codes. We say that a
code $C$ is self-orthogonal if and only if $C\subseteq C^\perp$.
We show that if a classical cyclic code is self-orthogonal,
then one can easily construct cyclic subsystem codes.

\begin{proposition}\label{lem:cyclic-subsysI}
Let $D$ be a $k$-dimensional self-orthogonal cyclic code of length $n$
over $\F_q$. Let $T_D$ and $T_{D^\perp}$ respectively denote the
defining sets of $D$ and $D^\perp$. If $T$ is a subset of $T_D
\setminus T_{D^\perp}$ that is the union of cyclotomic cosets, then
one can define a cyclic code $C$ of length $n$ over $\F_q$ by the
defining set $T_C= T_D \setminus (T \cup T^{-1})$.  If $r=|T\cup
T^{-1}|$ is in the range $0\le r< n-2k$, and $d= \min \wt(D^\perp
\setminus C)$, then there exists a subsystem code with parameters
$[[n,n-2k-r,r,d]]_q$.
\end{proposition}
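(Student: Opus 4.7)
The plan is to reduce the statement to the Euclidean Construction (Lemma~\ref{lem:css-Euclidean-subsys}) by verifying that the cyclic code $C$ has exactly the right dimension, and that its hull $C\cap C^\perp$ equals $D$. All of the work will be bookkeeping on defining sets via Lemma~\ref{lem:definingsets}.

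First, I would establish the key set-theoretic identity $T_D\setminus T_{D^\perp}=T_D\cap T_D^{-1}$. Since $D\subseteq D^\perp$, part~(iii) of Lemma~\ref{lem:definingsets} gives $T_{D^\perp}\subseteq T_D$, and by part~(iv) we have $T_{D^\perp}=N\setminus T_D^{-1}$. Subtracting from $T_D$ yields $T_D\setminus(N\setminus T_D^{-1})=T_D\cap T_D^{-1}$. Hence the hypothesis $T\subseteq T_D\setminus T_{D^\perp}$ is exactly $T\subseteq T_D\cap T_D^{-1}$. Inverting both sides gives $T^{-1}\subseteq T_D\cap T_D^{-1}$, so $T\cup T^{-1}\subseteq T_D\cap T_D^{-1}\subseteq T_D$. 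Because the map $x\mapsto -x\bmod n$ permutes the $q$-cyclotomic cosets, $T^{-1}$, and therefore $T\cup T^{-1}$, is a union of cyclotomic cosets, so $T_C=T_D\setminus(T\cup T^{-1})$ is a legitimate defining set and defines a cyclic code $C$ of length $n$ over $\F_q$.

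Next I would compute $\dim C$ and the defining set of $C^\perp$. Since $T\cup T^{-1}\subseteq T_D$, $|T_C|=|T_D|-|T\cup T^{-1}|=(n-k)-r$, so $\dim C=k+r$. For $C^\perp$, using part~(iv) and the fact that $T\cup T^{-1}\subseteq T_D^{-1}$,
\[
T_{C^\perp}=N\setminus T_C^{-1}=N\setminus\bigl(T_D^{-1}\setminus(T\cup T^{-1})\bigr)=(N\setminus T_D^{-1})\cup(T\cup T^{-1})=T_{D^\perp}\cup(T\cup T^{-1}).
\]
Then, by part~(i) of Lemma~\ref{lem:definingsets},
\[
T_{C\cap C^\perp}=T_C\cup T_{C^\perp}=\bigl(T_D\setminus(T\cup T^{-1})\bigr)\cup T_{D^\perp}\cup(T\cup T^{-1})=T_D,
\]
so $C\cap C^\perp=D$ and $\dim(C\cap C^\perp)=k$.

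Finally, I would invoke the Euclidean Construction with $k'=k+r$ and $k''=k$. The hypothesis $k'+k''=2k+r<n$ is exactly the range condition $0\le r<n-2k$ in the statement, so the lemma applies and produces a subsystem code with parameters
\[
\bigl[[n,\,n-(k'+k''),\,k'-k'',\,\wt((C\cap C^\perp)^\perp\setminus C)]\bigr]_q=[[n,\,n-2k-r,\,r,\,\wt(D^\perp\setminus C)]]_q,
\]
which is the desired result. The only step that requires genuine care is the manipulation of defining sets to show $T_{C^\perp}=T_{D^\perp}\cup(T\cup T^{-1})$ and the subsequent identity $C\cap C^\perp=D$; everything else is accounting and a direct appeal to the Euclidean Construction.
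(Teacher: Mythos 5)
Your proposal is correct and follows essentially the same route as the paper's own proof: establish $T_{D^\perp}\subseteq T_D$ and the closure of $T_D\setminus T_{D^\perp}$ under $x\mapsto -x$ (your identity $T_D\setminus T_{D^\perp}=T_D\cap T_D^{-1}$ is just a restatement of this), compute $T_{C^\perp}=T_{D^\perp}\cup(T\cup T^{-1})$ and $T_C\cup T_{C^\perp}=T_D$ to get $C\cap C^\perp=D$, read off $\dim C=k+r$ and $\dim D=k$, and invoke the Euclidean Construction (Lemma~\ref{lem:css-Euclidean-subsys}). Your write-up is if anything slightly more explicit than the paper's (e.g.\ checking that $T^{-1}$ is a union of cyclotomic cosets and spelling out the set-difference identity), but the argument is the same.
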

\begin{proof}
Since $D$ is a self-orthogonal cyclic code, we have $D\subseteq
D^\perp$, whence $T_{D^\perp} \subseteq T_{D}$ by
Lemma~\ref{lem:definingsets}~iii).  Observe that if $s$ is an
element of the set  $S= T_D\setminus T_{D^\perp} = T_D \setminus
(N\setminus T_D^{-1})$, then $-s$ is an element of $S$ as well.
In particular, $T^{-1}$ is a subset of $T_D\setminus T_{D^\perp}$.

By definition, the cyclic code $C$ has the defining set $T_C= T_D
\setminus (T \cup T^{-1})$; thus, the dual code $C^\perp$ has the
defining set
$$T_{C^\perp}=N\setminus T_{C}^{-1} =
T_{D^\perp}\cup (T\cup T^{-1}).$$ Furthermore, we have
$$T_C \cup T_{C^\perp}=(T_D \setminus (T \cup T^{-1})) \cup (T_{D^\perp}\cup
T\cup T^{-1})=T_D;$$
therefore, $C \cap C^{\perp}=D$ by Lemma~\ref{lem:definingsets}~i).

Since $n-k=|T_D|$ and $r=|T\cup T^{-1}|$, we have $\dim_{\F_q}
D=n-|T_D| = k$ and $\dim_{\F_q} C = n-|T_C|=k+r$.
Thus, by Lemma~\ref{lem:css-Euclidean-subsys} there exists an
$\F_q$-linear subsystem code with parameters
$[[n,\kappa,\rho,d]]_q$, where
\begin{compactenum}[i)]
\item $\kappa = \dim D^\perp -\dim C=n-k-(k+r)=n-2k-r$,
\item $\rho = \dim C -\dim D= k+r-k=r$,
\item $d= \min \wt(D^\perp \setminus C)$,
\end{compactenum}
as claimed.
\end{proof}

We can also derive subsystem codes from cyclic codes over $\F_{q^2}$
by using cyclic codes that are self-orthogonal with respect to the
Hermitian inner product.

\begin{proposition}\label{lem:cyclic-subsysII}
Let $D$ be a cyclic code of length $n$ over $\F_{q^2}$ such that
$D\subseteq D^\hdual$. Let $T_D$ and $T_{D^\hdual}$ respectively be
the defining set of $D$ and $D^\hdual$. If $T$ is a subset of $T_D
\setminus T_{D^\hdual}$ that is the union of cyclotomic cosets, then
one can define a cyclic code $C$ of length $n$ over $\F_{q^2}$ with
defining set $T_C= T_D \setminus (T \cup T^{-q})$.  If $n-k=|T_D|$ and
$r=|T\cup T^{-q}|$ with $0\le r<n-2k$, and $d=\wt(D^\hdual\setminus
C)$, then there exists an $[[n,n-2k-r,r, d]]_q$ subsystem code.
\end{proposition}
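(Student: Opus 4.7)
The plan is to mirror the proof of Proposition~\ref{lem:cyclic-subsysI}, swapping the Euclidean duality relation $T_{D^\perp}=N\setminus T_D^{-1}$ of Lemma~\ref{lem:definingsets}~iv) for its Hermitian counterpart $T_{D^\hdual}=N\setminus T_D^{-q}$ from Lemma~\ref{lem:definingsets}~v) (applied with $q^2$ playing the role of ``$q$'' in that lemma, so $r=q$). In particular, $D\subseteq D^\hdual$ forces $T_{D^\hdual}\subseteq T_D$, and the set $T_D\setminus T_{D^\hdual}$ equals $T_D\cap T_D^{-q}$, i.e., the elements of $T_D$ that are of the form $-qt\bmod n$ for some $t\in T_D$.

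The first technical step is to check that the ``hole'' $T\cup T^{-q}$ punched in $T_D$ is well-behaved. I will verify (a) $T^{-q}$ is again a union of cyclotomic cosets, which follows because the map $s\mapsto -qs\bmod n$ commutes with multiplication by $q^2$; and (b) $T^{-q}\subseteq T_D\setminus T_{D^\hdual}$: indeed, for $s\in T$ one has $s=-qt$ for some $t\in T_D$ (since $s\notin T_{D^\hdual}$), hence $-qs=q^2 t\in T_D$ by closure of $T_D$ under multiplication by $q^2$, and $-qs\in T_D^{-q}$ is witnessed by $s\in T_D$. With these in hand I set $T_C=T_D\setminus(T\cup T^{-q})$, compute $T_C^{-q}=T_D^{-q}\setminus(T^{-q}\cup T)$ (using $T^{q^2}=T$), and obtain $T_{C^\hdual}=N\setminus T_C^{-q}=T_{D^\hdual}\cup T\cup T^{-q}$. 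Then $T_C\cup T_{C^\hdual}=T_D$, which by Lemma~\ref{lem:definingsets}~i) gives $C\cap C^\hdual=D$.

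The dimension count is immediate: $\dim_{\F_{q^2}} D=n-|T_D|=k$ and $\dim_{\F_{q^2}} C=n-|T_C|=k+r$, since $|T\cup T^{-q}|=r$ by hypothesis. Invoking the Hermitian Construction (Lemma~\ref{lem:css-Hermitina-subsys}) with $k'=k+r$ and $k''=k$ then produces an $[[n,n-2k-r,r,d]]_q$ subsystem code with $d=\wt(D^\hdual\setminus C)$, as claimed. The one really delicate point is step~(b): it is precisely where the Hermitian inner product forces $-q$ rather than $-1$ into the combinatorics, and getting the bookkeeping right---especially remembering that here $T_D$'s closure is with respect to multiplication by $q^2$, not $q$---is the main thing one has to watch. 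Everything else parallels the Euclidean argument verbatim.
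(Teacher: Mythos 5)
Your proposal is correct and follows essentially the same route as the paper: use Lemma~\ref{lem:definingsets} (with the Hermitian dual giving $T_{D^\hdual}=N\setminus T_D^{-q}$, cosets taken with respect to $q^2$) to show $T_{C^\hdual}=T_{D^\hdual}\cup T\cup T^{-q}$, conclude $C\cap C^\hdual=D$ from $T_C\cup T_{C^\hdual}=T_D$, count dimensions, and apply Lemma~\ref{lem:css-Hermitina-subsys}. The only difference is that you spell out the closure facts ($T^{-q}$ is a union of $q^2$-cyclotomic cosets and lies in $T_D\setminus T_{D^\hdual}$) that the paper dismisses with ``one easily verifies,'' which is a welcome but not substantively different elaboration.
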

\begin{proof}
Since $D \subseteq D^\hdual$, their defining sets satisfy
$T_{D^\hdual} \subseteq T_{D}$ by
Lemma~\ref{lem:definingsets}~iii). If $s$ is an element of
$T_{D}\setminus T_{D^\hdual}$, then one easily verifies that
$-qs \pmod n$ is an element of $T_{D}\setminus T_{D^\hdual}$.

Let $N=\{0,1,\dots,n-1\}$. Since the cyclic code $C$ has the defining
set $T_C= T_D \setminus (T \cup T^{-q})$, its dual code $C^\hdual$ has
the defining set
$T_{C^\hdual}= N\setminus T_C^{-q} =
T_{D^\hdual}\cup (T\cup T^{-q}).$
We notice that
$$T_C \cup T_{C^\hdual}=(T_D \setminus
(T \cup T^{-q})) \cup (T_{D^\hdual}\cup T\cup T^{-q})=T_D;$$
thus, $C \cap C^{\hdual}=D$ by
Lemma~\ref{lem:definingsets}~i).

Since $n-k=|T_D|$ and $r=|T\cup T^{-q}|$, we have $\dim D=n-|T_D|=k$
and $\dim C=n-|T_C|=k+r$. Thus, by
Lemma~\ref{lem:css-Hermitina-subsys} there exists an $[[n,\kappa,\rho,d]]_q$
subsystem code with
\begin{compactenum}[i)]
\item $\kappa = \dim D^\hdual -\dim C=(n-k)-(k+r)=n-2k-r$,
\item $\rho=\dim C -\dim D= k+r-k=r$,
\item $d= \min \wt(D^\hdual \setminus C)$,
\end{compactenum}
as claimed.
\end{proof}

We include an example to illustrate the construction given in the
previous proposition.
\begin{example}
Consider the narrow-sense BCH code $D^\hdual$ of length $n=31$ over
$\F_4$ with designed distance 5. The defining set $T_{D^\hdual}$ of
$D^\hdual$ is given by $T_{D^\hdual} = C_1\cup C_2\cup C_3\cup C_4 =
C_1 \cup C_3$, where the cyclotomic cosets of $1$ and $3$ are given by
$$C_1=\{1,4,16,2,8\} \quad \text{ and } \quad C_3=\{3,12,17,6,24\}.$$
If $N=\{0,1,\dots,30\}$, then the defining set of the dual code $D$ is
given by $T_D = N\setminus (C_{15} \cup C_7) = C_0\cup C_1\cup C_3\cup
C_5\cup C_{11}. $ Therefore, $D\subset D^\hdual$, $\dim D^\hdual=21$
and $\dim D=10$.  If we choose $T=C_5$, then $T^{-2} = C_{11}$, whence
the defining set $T_C$ of the code $C$ is given by $T_C = T_D\setminus
(C_5 \cup C_{11}) = C_0 \cup C_1\cup C_3.$ It follows that $\dim C =
20$ and $\dim C^\hdual = 11$. Therefore, the construction of the
previous proposition yields a BCH subsystem code with parameters
$[[31,1,10,\geq 5]]_2.$
\end{example}

The general principle behind the previous example yields the following
simple recipe for the construction of subsystem codes: Choose a cyclic
code (such as a BCH or Reed-Solomon code) with known lower bound
$\delta$ on the minimum distance that contains its (hermitian) dual
code, and use Proposition~\ref{lem:cyclic-subsysI} (or
Proposition~\ref{lem:cyclic-subsysII}) to derive subsystem codes.
This approach allows one to control the minimum distance $d$ of the
subsystem code, since $d\ge \delta$ is guaranteed. Another advantage
is that one can exploit the cyclic structure in encoding and decoding
algorithms.

For example, if we start with primitive, narrow-sense BCH codes, then
Proposition~\ref{lem:cyclic-subsysI} yields the following family of
subsystem codes:

\begin{corollary}
Consider a primitive, narrow-sense BCH
code of length $n=q^m-1$ with $m\ge 2$ over $\F_q$ with designed
distance $\delta$ in the range
\begin{equation}\label{eq:ddistrange}
2\le \delta \le q^{\lceil m/2\rceil}-1-(q-2)[m \text{ is odd}].
\end{equation}
If $T$ is a subset of $N \setminus \big(\bigcup_{a=1}^{\delta-1} (C_a\cup C_{-a})\big)$ that is a union
of cyclotomic cosets and $r=|T\cup T^{-1}|$ with $0\le r< n-2k$, where
$k=m\lceil (\delta-1)(1-1/q)\rceil$,  then there exists an
$$ [[q^m-1,q^m-1-2m\lceil (\delta-1)(1-1/q)\rceil -r,r,\ge \delta]]_q$$
subsystem code.
\end{corollary}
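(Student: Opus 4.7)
The plan is to apply Proposition~\ref{lem:cyclic-subsysI} to a suitable self-orthogonal cyclic code derived from the given BCH code. Let $B$ denote the primitive, narrow-sense BCH code of length $n=q^m-1$ over $\F_q$ with designed distance $\delta$. Its defining set is $T_B=\bigcup_{a=1}^{\delta-1}C_a$, and by the BCH bound $B$ has minimum distance at least $\delta$. Using the standard dimension count for primitive narrow-sense BCH codes, $\dim B = n - m\lceil(\delta-1)(1-1/q)\rceil$.

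Next, I would set $D=B^\perp$ and verify that $D$ is self-orthogonal, i.e., $D\subseteq D^\perp = B$. This is precisely the known dual-containment result for primitive narrow-sense BCH codes whose designed distance $\delta$ satisfies the inequality stated in the corollary; that bound on $\delta$ is exactly the one under which $B^\perp\subseteq B$ holds, so the hypothesis matches. From this we immediately get $\dim D = n-\dim B = m\lceil(\delta-1)(1-1/q)\rceil =: k$, agreeing with the value of $k$ used in the statement.

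Now I would compute the defining sets. By Lemma~\ref{lem:definingsets}(iv), $T_D = T_{B^\perp} = N\setminus T_B^{-1}= N\setminus\bigcup_{a=1}^{\delta-1}C_{-a}$, while $T_{D^\perp} = T_B = \bigcup_{a=1}^{\delta-1}C_a$. Hence
\[
T_D\setminus T_{D^\perp}= N\setminus\Bigl(\bigcup_{a=1}^{\delta-1}(C_a\cup C_{-a})\Bigr),
\]
which matches the set from which $T$ is chosen in the corollary. So for any cyclotomic-coset union $T$ inside this set, Proposition~\ref{lem:cyclic-subsysI} is applicable and yields a cyclic subsystem code with parameters $[[n,n-2k-r,r,d]]_q$, where $r=|T\cup T^{-1}|$ and $d=\min\wt(D^\perp\setminus C)$.

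Finally, I would argue the distance bound $d\ge \delta$: since $D^\perp\setminus C\subseteq D^\perp\setminus\{0\} = B\setminus\{0\}$, every nonzero word in $D^\perp\setminus C$ is a nonzero codeword of $B$ and therefore has weight at least $\delta$ by the BCH bound. Substituting $k=m\lceil(\delta-1)(1-1/q)\rceil$ into $n-2k-r$ gives the claimed parameters. The main obstacle is genuinely just the dual-containment step: without the upper bound on $\delta$ in \eqref{eq:ddistrange}, $D=B^\perp$ need not be self-orthogonal and Proposition~\ref{lem:cyclic-subsysI} cannot be invoked; once this is cited from the literature on dual-containing BCH codes, the rest is a bookkeeping exercise in defining sets.
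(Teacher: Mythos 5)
Your proposal is correct and follows essentially the same route as the paper: take the BCH code as $D^\perp$, invoke the known dual-containment and dimension results for primitive narrow-sense BCH codes with $\delta$ in the stated range, observe that the admissible set for $T$ equals $T_D\setminus T_{D^\perp}$, and apply Proposition~\ref{lem:cyclic-subsysI}, with the BCH bound giving $d\ge\delta$. The only difference is cosmetic (you name the BCH code $B$ and spell out the weight argument that the paper leaves implicit).
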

\begin{proof}
By \cite[Theorem 2]{aly06a}, a primitive, narrow-sense BCH code
$D^\perp$ with designed distance $\delta$ in the range
(\ref{eq:ddistrange}) satisfies $D\subseteq D^\perp$.  By
\cite[Theorem~7]{aly06a}, the dimension of $D^\perp$ is given by $\dim
D^\perp = q^m-1 - m\lceil (\delta-1)(1-1/q)\rceil=n-k$, whence $k=\dim
D$. Let $T_D$ and $T_{D^\perp}$ respectively
denote the defining sets of $D$ and $D^\perp$.
It follows from the definitions that
$T_{D^\perp}= \bigcup_{a=1}^{\delta-1} C_a$ and that $T$ is a subset of
$$N\setminus (T_{D^\perp} \cup T_{D^\perp}^{-1}) = (N\setminus
T_{D^{\perp}}^{-1})\setminus T_{D^\perp}= T_{D} \setminus T_{D^\perp}.$$
If $T_C= T_D\setminus (T\cup T^{-1})$
denotes the defining set of a cyclic code $C$, then $\dim C = k+r$. By
Proposition~\ref{lem:cyclic-subsysI}, there exists an
$[[n,n-2k-r,r,\ge \delta]]_q$ subsystem code, which proves the claim.
\end{proof}

\begin{table}[t]
\caption{subsystem BCH codes that are derived using the Euclidean
construction}
\label{table:bchtable}
\begin{center}
\begin{tabular}{|l|l|c|}
\hline
\text{Subsystem Code} &  \text{Parent BCH} & \text {Designed}  \\
 &  \text{Code} $C$ & \text{distance }  \\
 \hline
 &&\\
 $[[15 ,4 ,3 ,3 ]]_2$   &$[15 ,7 ,5 ]_2$  & 4\\
 $ [[15 ,6 ,1 ,3 ]]_2  $ &$[15 ,5 ,7 ]_2 $ & 6\\
  $ [[31 ,10,1 ,5 ]]_2 $  &$[31 ,11,11]_2 $ & 8\\
  $  [[31 ,20,1 ,3 ]]_2  $ &$[31 ,6 ,15]_2 $ & 12\\
   $  [[63 ,6 ,21,7 ]]_2 $  &$[63 ,39,9 ]_2 $ & 8\\
$ [[63 ,6 ,15,7 ]]_2 $  &$[63 ,36,11]_2$  & 10\\
 $ [[63 ,6 ,3 ,7 ]]_2 $  &$[63 ,30,13]_2$  & 12\\
$ [[63 ,18,3 ,7 ]]_2$   &$[63 ,24,15]_2$  & 14\\
$  [[63 ,30,3 ,5 ]]_2$   &$[63 ,18,21]_2 $ & 16\\
  $ [[63 ,32,1 ,5 ]]_2 $  &$[63 ,16,23]_2 $ & 22\\
  $  [[63 ,44,1 ,3 ]]_2  $ &$[63 ,10,27]_2 $ & 24\\
  $  [[63 ,50,1 ,3 ]]_2  $ &$[63 ,7 ,31]_2  $& 28\\
  \hline
  &&\\
  $[[15 ,2 ,5 ,3 ]]_4$   &$[15 ,9 ,5 ]_4$  & 4\\
   $[[15 ,2 ,3 ,3 ]]_4  $ &$[15 ,8 ,6 ]_4$  & 6\\
    $[[15 ,4 ,1 ,3 ]]_4  $ &$[15 ,6 ,7 ]_4$  & 7\\
     $[[15 ,8 ,1 ,3 ]]_4  $ &$[15 ,4 ,10]_4$  & 8\\

$[[31 ,10,1 ,5 ]]_4  $ &$[31 ,11,11]_4$  & 8\\
 $[[31 ,20,1 ,3 ]]_4  $ &$[31 ,6 ,15]_4$  & 12\\
  $[[63 ,12,9 ,7 ]]_4  $ &$[63 ,30,15]_4$  & 15\\
   $[[63 ,18,9 ,7 ]]_4  $ &$[63 ,27,21]_4$  & 16\\
    $[[63 ,18,7 ,7 ]]_4  $ &$[63 ,26,22]_4$  & 22\\

 \hline
\end{tabular}
\\$*$ punctured code\\
$+$ Extended code
\end{center}
\end{table}

Similarly, we can obtain a hermitian variation of the preceding
corollary with the help of Proposition~\ref{lem:cyclic-subsysII}.
\begin{corollary}
Consider a primitive, narrow-sense BCH
code of length $n=q^{2m}-1$ with $m\neq 2$ over $\F_q$ with designed
distance $\delta$ in the range
\begin{equation}\label{eq:ddistrange2}
2\le \delta\le q^m-1
\end{equation}
If $T$ is a subset of the set $N \setminus \left(\bigcup_{a=1}^{\delta-1} (C_a\cup
C_{-qa})\right)$ that is a union
of cyclotomic cosets and $r=|T\cup T^{-q}|$ with $0\le r< n-2k$, where
$k=m\lceil (\delta-1)(1-1/{q^2})\rceil$,  then there exists a
$$ [[q^{2m}-1,q^{2m}-1-
2m\lceil (\delta-1)(1-1/{q^2})\rceil -r,r,\ge \delta]]_q$$
subsystem code.
\end{corollary}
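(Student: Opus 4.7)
The plan is to mirror the proof of the preceding (Euclidean) corollary, but to replace every use of Proposition~\ref{lem:cyclic-subsysI} by Proposition~\ref{lem:cyclic-subsysII}, and every appeal to a fact about Euclidean dual-containment of BCH codes by the corresponding Hermitian fact. Concretely, I would first invoke the companion BCH result in \cite{aly06a} (the Hermitian analog of Theorem~2 there, valid under the hypothesis $2\le\delta\le q^m-1$ and $m\neq 2$) to guarantee that the primitive, narrow-sense BCH code $D^\hdual$ of length $n=q^{2m}-1$ over $\F_{q^2}$ with designed distance $\delta$ satisfies $D\subseteq D^\hdual$. The dimension formula in the same reference gives $\dim D^\hdual = n - m\lceil(\delta-1)(1-1/q^2)\rceil = n - k$, hence $\dim D = k$, which pins down the quantity~$k$ appearing in the statement.

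Next I would identify the defining sets explicitly. By definition of a narrow-sense BCH code one has $T_{D^\hdual} = \bigcup_{a=1}^{\delta-1} C_a$, and by Lemma~\ref{lem:definingsets}~v) the defining set of its Hermitian dual is $T_D = N\setminus T_{D^\hdual}^{-q}$. The key set-theoretic identity to verify is
$$T_D \setminus T_{D^\hdual} \;=\; (N\setminus T_{D^\hdual}^{-q})\setminus T_{D^\hdual} \;=\; N\setminus\Bigl(\bigcup_{a=1}^{\delta-1}(C_a\cup C_{-qa})\Bigr),$$
using that $T_{D^\hdual}^{-q} = \bigcup_{a=1}^{\delta-1}C_{-qa}$, which is itself a union of $q^2$-cyclotomic cosets since multiplication by $-q$ commutes with multiplication by $q^2$ modulo~$n$. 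Therefore every $T$ satisfying the hypothesis is automatically a subset of $T_D\setminus T_{D^\hdual}$, precisely the set to which Proposition~\ref{lem:cyclic-subsysII} applies.

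Finally, with $T_C:=T_D\setminus(T\cup T^{-q})$ the defining set of a cyclic code $C$, Lemma~\ref{lem:definingsets} gives $\dim C = n-|T_C| = k+r$, and the hypothesis $0\le r<n-2k$ matches the hypothesis of Proposition~\ref{lem:cyclic-subsysII}. Applying that proposition with these data produces a subsystem code of parameters $[[n,n-2k-r,r,d]]_q$ with $d\ge\delta$, since the pure distance $\wt(D^\hdual\setminus C)$ is at least the BCH bound $\delta$ on $D^\hdual$; substituting $n=q^{2m}-1$ and the value of $k$ yields exactly the parameters claimed.

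The only nontrivial obstacle I foresee is justifying the Hermitian dual-containment under the stated range of $\delta$, including the technical exclusion $m\neq 2$; this is a cited fact rather than something to reprove, but one should be careful to quote the correct range and dimension formula from \cite{aly06a}, since the whole conclusion rests on them. The remaining steps are essentially bookkeeping with defining sets and cyclotomic cosets, entirely parallel to the Euclidean case.
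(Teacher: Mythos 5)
Your proposal is correct and takes essentially the same route as the paper: the paper's own proof is a one-line remark that the argument is ``similar to the proof of the previous corollary'' and follows from Theorems~4 and~7 of \cite{aly06a} together with Proposition~\ref{lem:cyclic-subsysII}, which are exactly the Hermitian dual-containment, dimension formula, and construction you invoke. You have merely written out explicitly the defining-set and cyclotomic-coset bookkeeping that the paper leaves implicit.
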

\begin{proof}
The proof is similar to the proof of the previous corollary, and is a
consequence of~\cite[Theorems~4 and 7]{aly06a} and
Proposition~\ref{lem:cyclic-subsysII}.
\end{proof}

It is straightforward to generalize the previous two corollaries to
the case of non-primitive BCH codes using the results given
in~\cite{aly07a,aly08phd}.

One of the disadvantages of the cyclic constructions is that the
parameter $r$ is restricted to values dictated by the possible
cardinalities of the sets $T\cup T^{-1}$ (or $T\cup T^{-q}$), where
$T$ is confined to be a union of cyclotomic cosets. In the next
section, we will see how one can overcome this limitation.

We conclude this section by giving some examples of the parameters of
subsystem BCH codes in Tables~\ref{table:bchtable}
and~\ref{table:bchtableII}.

\begin{table}[t]
\caption{Subsystem BCH codes that are derived with the help of the hermitian construction}
\label{table:bchtableII}
\begin{center}
\begin{tabular}{|c|c|c|}
\hline
\text{Subsystem Code} &  \text{Parent BCH} & \text {Designed}  \\
 &  \text{Code $C$} & \text{ distance }  \\
 \hline
 $[[14 ,1 ,3 ,4 ]]_2$ &$[14 ,8 ,5 ]_{2^2}$&$6^*$ \\
$[[15,1,2,5]]_2$ & $[15,8,6]_{2^2}$ &6 \\
{} $[[15,5,2,3]]_2$&$[15,6,7]_{2^2}$&7\\
$[[16 ,5 ,2 ,3 ]]_2$ &$ [16 ,6 ,7 ]_{2^2}$&$7^+$ \\
 \hline
$[[17,8,1,4]]_2 $&$ [17,5,9]_{2^2}$ &4 \\
\hline
 $[[21,6,3,3]]_2$&$ [21,9,7]]_{2^2}$&6\\{}
 $[[21 ,7 ,2 ,3 ]]_2$& $ [21 ,8 ,9 ]_{2^2}$&8\\
\hline $[[31,10,1,5]]_2$&$[31,11,11]_{2^2} $&8\\{}
$[[31 ,20,1 ,3 ]]_2$&$ [31 ,6 ,15]_{2^2}$&12\\
$[[32 ,10,1 ,5 ]]_2$ &$ [32 ,11,11]_{2^2}$&$8^+$\\
$[[32 ,20,1 ,3 ]]_2$ &$[32 ,6 ,15]_{2^2}$&$12^+$\\
 \hline $[[25 ,12,3 ,3 ]]_3$ & $[25 ,8
,12]_{3^2}$&$9^*$
\\
$[[26 ,6 ,2 ,5 ]]_3$&$[26 ,11,8 ]_{3^2}$&8\\
$[[26 ,12,2 ,4 ]]_3 $&$[26 ,8 ,13]_{3^2} $&9\\
$[[26 ,13,1 ,4 ]]_3$&$[26 ,7 ,14]_{3^2}$&14\\
\hline
$[[80 ,1 ,17,20]]_3$&$[80 ,48,21]_{3^2} $&21\\
$[[80 ,5 ,17,17]]_3 $& $[80 ,46,22]_{3^2}$&22\\
 \hline
\end{tabular}
\\$*$ punctured code\\
$+$ Extended code
\end{center}
\end{table}

\section{Trading Dimensions of subsystem and co-subsystem codes}\label{sec:dimensions}
In this section we show how one can trade the dimensions of subsystem and
co-subsystem to obtain new codes from a given subsystem or stabilizer code.
The results are obtained by exploiting the symplectic geometry of the space. A
remarkable consequence is that nearly any stabilizer code yields a series of
subsystem codes.

Our first result shows that one can decrease the dimension of the
subsystem and increase at the same time the dimension of the
co-subsystem while keeping or increasing the minimum distance of the
subsystem code.

\begin{theorem}\label{th:shrinkK}
Let $q$ be a power of a prime~$p$. If there exists an $((n,K,R,d))_q$
subsystem code with $K>p$ that is pure to $d'$, then there exists an
$((n,K/p,pR,\geq d))_q$ subsystem code that is pure to $\min\{d,d'\}$.
If a pure $((n,p,R,d))_q$ subsystem code exists, then there exists a
$((n,1,pR,d))_q$ subsystem code.
\end{theorem}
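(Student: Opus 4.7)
My plan is to invoke Theorem~\ref{th:oqecfq} in both directions. First, I would repackage the hypothesized $((n,K,R,d))_q$ code as an additive code $C\le \F_q^{2n}$ with $D=C\cap C^\sdual$, $|C|=q^nR/K$, $|D|=q^n/(KR)$, and $d=\swt(D^\sdual\setminus C)$. Then I want to enlarge $C$ to a new additive code $C'$ satisfying $C'\cap C'^\sdual=D$ and $|C'|=p^2|C|$; feeding this back through Theorem~\ref{th:oqecfq} will automatically replace $K$ by $K/p$ and $R$ by $pR$ while leaving $D$ (and therefore $D^\sdual$) unchanged, so the bound $\swt(D^\sdual\setminus C')\ge\swt(D^\sdual\setminus C)=d$ on the new minimum distance is immediate.

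The geometric crux is to find a hyperbolic pair inside $C^\sdual/D$. The chain $D\subseteq C,\,C^\sdual\subseteq D^\sdual$ together with the self-orthogonality of $D$ shows that the trace-symplectic form descends to a non-degenerate $\F_p$-form on the quotient $D^\sdual/D$, in which $C/D$ and $C^\sdual/D$ are mutual symplectic complements with $(C/D)\cap(C^\sdual/D)=0$. Hence $C^\sdual/D$ is itself a non-degenerate symplectic $\F_p$-space of dimension $2\log_p K$, which is at least $4$ when $K>p$ and equal to $2$ when $K=p$. In either case I can select $u+D,\,v+D\in C^\sdual/D$ with $\langle u,v\rangle_s=1$, and set $C'=C+\spann_{\F_p}\{u,v\}$. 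The independence of $u+D$ and $v+D$ from $C/D$ and from each other then gives $|C'|=p^2|C|$.

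Next I would verify $C'\cap C'^\sdual=D$ by a short pairing calculation: writing a generic element of $C'$ as $w=c+au+bv$ with $c\in C$ and $a,b\in\F_p$, pairing $w$ against every $c'\in C$ and using $u,v\in C^\sdual$ reduces the condition to $c\in C\cap C^\sdual=D$, while pairing $w$ against $u$ and $v$ themselves and using $\langle u,u\rangle_s=\langle v,v\rangle_s=0$ together with $\langle u,v\rangle_s=1$ forces $a=b=0$. Substituting $C'$ and $D$ into Theorem~\ref{th:oqecfq} then yields parameters $((n,K/p,pR,\ge d))_q$. For purity, elements of $C$ already have weight $\ge d'$ by hypothesis, while each $c+au+bv$ with $(a,b)\neq(0,0)$ lies in $D^\sdual\setminus C$ (as $c\in C\subseteq D^\sdual$, $u,v\in C^\sdual\subseteq D^\sdual$, and $au+bv\in C$ would force $(a,b)=(0,0)$ via the same hyperbolic independence), so it has weight $\ge d$; together this gives purity to $\min\{d,d'\}$.

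For the second statement ($K=p$ with full purity $d'=d$) the construction is identical, now producing $K'=1$ and $R'=pR$. A quick size check yields $|D^\sdual|=q^nKR=q^npR=|C'|$, so $D^\sdual=C'$ and the distance formula of Theorem~\ref{th:oqecfq} collapses to $\swt(D^\sdual)=\swt(C')$; purity of $C$ and the previous argument give $\swt(C')\ge d$, and any weight-$d$ witness in $D^\sdual\setminus C\subseteq C'$ pins the minimum down to exactly $d$, so the resulting code is automatically pure as the $\dim A=1$ convention requires. The one place where genuine geometric content rather than bookkeeping enters is verifying that $C^\sdual/D$ is a non-degenerate symplectic $\F_p$-space of the claimed dimension; once that is in hand, the existence of the hyperbolic pair and the subsequent dimension, distance, and purity computations are mechanical.
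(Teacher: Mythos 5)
Your proof is correct, and the core construction is the same as the paper's: enlarge $C$ by a hyperbolic pair taken from $C^\sdual$ (leaving $D=C\cap C^\sdual$ untouched), feed the enlarged code back into Theorem~\ref{th:oqecfq}, and observe that the distance can only go up while purity to $\min\{d,d'\}$ follows because the new elements lie in $D^\sdual\setminus C$. Where you differ is in how the hyperbolic pair is produced and how the identity $C'\cap C'^\sdual=D$ is checked: the paper writes $C$ in a standard $\F_p$-basis $\{z_1,\dots,z_s,x_{s+1},z_{s+1},\dots,x_{s+r},z_{s+r}\}$, extends it to a full symplectic basis of $\F_q^{2n}$ (citing Cohn's symplectic-basis theorem), and then simply reads off $C_m$, $C_m^\sdual$ and $C_m\cap C_m^\sdual$ from the basis; you instead argue intrinsically that the trace-symplectic form induces a non-degenerate form on $D^\sdual/D$ in which $C/D$ and $C^\sdual/D$ are complementary, so $C^\sdual/D$ is a non-degenerate symplectic $\F_p$-space of dimension $2\log_p K\ge 2$, pick the pair there, and verify $|C'|=p^2|C|$ and $C'\cap C'^\sdual=D$ by direct pairing computations. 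The paper's route makes all the dual computations immediate by inspection at the cost of invoking the basis-extension theorem for the whole ambient space; yours is more self-contained and only needs the hyperbolic pair to exist in the quotient, at the cost of a few explicit orthogonality checks (which you carry out correctly, including the case split $C'\ne D^\sdual$ for $K>p$ versus $C'=D^\sdual$ for $K=p$, matching the paper's cases (a) and (b)).
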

\begin{proof}
By definition, an $((n,K,R,d))_q$ Clifford subsystem code is
associated with a classical additive code $C \subseteq \F_q^{2n}$ and
its subcode $D=C\cap C^\sdual$ such that $x=|C|$, $y=|D|$,
$K=q^n/(xy)^{1/2}$, $R=(x/y)^{1/2}$, and $d=\swt(D^\sdual - C)$ if
$C\neq D^\sdual$, otherwise $d=\swt(D^\sdual)$ if $D^\sdual=C$.

We have $q=p^m$ for some positive integer $m$. Since $K$ and $R$ are
positive integers, we have $x=p^{s+2r}$ and $y=p^s$ for some integers
$r\ge 1$, and $s\ge 0$. There exists an $\F_p$-basis of $C$
of the form
$$ C = \spann_{\F_p}\{z_1,\dots,z_s,x_{s+1},z_{s+1},\dots,
x_{s+r},z_{s+r}\}$$ that can be extended to a symplectic basis
$\{x_1,z_1,\dots,x_{nm},z_{nm}\}$ of $\F_q^{2n}$, that is,
$\scal{x_k}{x_\ell}=0$, $\scal{z_k}{z_\ell}=0$,
$\scal{x_k}{z_\ell}=\delta_{k,\ell}$ for all $1\le k,\ell \le nm$,
see~\cite[Theorem 8.10.1]{cohn05}.

Define an additive code $$C_m =
\spann_{\F_p}\{z_1,\dots,z_s,x_{s+1},z_{s+1},\dots,
x_{s+r+1},z_{s+r+1}\}.$$ It follows that
$$C^\sdual_m=\spann_{\F_p}\{z_1,\dots,z_s,x_{s+r+2},z_{s+r+2}, \dots,
x_{nm},z_{nm}\}$$ and
$$D=C_m\cap C_m^\sdual =
\spann_{\F_p}\{z_1,\dots,z_s\}.$$
By definition, the code $C$ is a
subset of $C_m$.

The subsystem code defined by $C_m$ has the parameters
$(n,K_m,R_m,d_m)$, where $K_m=q^n/(p^{s+2r+2}p^s)^{1/2}=K/p$ and
$R_m=(p^{s+2r+2}/p^s)^{1/2}=pR$. For the claims concerning
minimum distance and purity, we distinguish two cases:
\begin{compactenum}[(a)]
\item If $C_m\neq D^\sdual$, then $K>p$ and $d_m=\swt(D^\sdual -
C_m)\ge \swt(D^\sdual-C)=d$. Since by hypothesis $\swt(D^\sdual-C)=d$
and $\swt(C)\ge d'$, and $D\subseteq C\subset C_m\subseteq D^\sdual$
by construction, we have $\swt(C_m)\ge \min\{ d,d'\}$; thus, the
subsystem code is pure to $\min\{d,d'\}$.

\item If $C_m=D^\sdual$, then $K_m=1=K/p$, that is, $K=p$;  it follows from
the assumed purity that $d=\swt(D^\sdual-C)=\swt(D^\sdual)=d_m$.
\end{compactenum}
This proves the claim.
\end{proof}

For $\F_q$-linear subsystem codes there exists a variation of the
previous theorem which asserts that one can construct the resulting
subsystem code such that it is again $\F_q$-linear.

\begin{theorem}\label{th:FqshrinkK}
Let $q$ be a power of a prime~$p$. If there exists an $\F_q$-linear
$[[n,k,r,d]]_q$ subsystem code with $k>1$ that is pure to $d'$, then
there exists an $\F_q$-linear $[[n,k-1,r+1,\geq d]]_q$ subsystem code
that is pure to $\min\{d,d'\}$.  If a pure $\F_q$-linear
$[[n,1,r,d]]_q$ subsystem code exists, then there exists an
$\F_q$-linear $[[n,0,r+1,d]]_q$ subsystem code.
\end{theorem}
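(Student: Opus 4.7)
The plan is to mirror the proof of Theorem \ref{th:shrinkK} but perform the entire construction over $\F_q$ instead of $\F_p$, so that $\F_q$-linearity is preserved. The crucial observation is that on $\F_q$-linear subspaces of $\F_q^{2n}$, the trace-symplectic dual $C^\sdual$ coincides with the dual with respect to the $\F_q$-bilinear symplectic form $\langle(a|b),(a'|b')\rangle = a'\cdot b - a\cdot b'$. Since this $\F_q$-symplectic form is non-degenerate on $\F_q^{2n}$, a standard application of \cite[Theorem 8.10.1]{cohn05} over $\F_q$ guarantees a symplectic $\F_q$-basis of $\F_q^{2n}$, which is exactly what we need.

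First I would translate the hypothesis via Theorem \ref{th:oqecfq}: an $\F_q$-linear $[[n,k,r,d]]_q$ subsystem code pure to $d'$ corresponds to an $\F_q$-linear code $C\subseteq\F_q^{2n}$ with subcode $D=C\cap C^\sdual$ satisfying $\dim_{\F_q} C=n-k+r$, $\dim_{\F_q} D=n-k-r$, $\swt(D^\sdual-C)=d$, and $\swt(C)\ge d'$. Setting $s=n-k-r$, I would pick an $\F_q$-symplectic basis $\{z_1,\dots,z_s,x_{s+1},z_{s+1},\dots,x_{s+r},z_{s+r}\}$ of $C$ (where the $z_i$ for $i\le s$ span $D$ and the remaining pairs are hyperbolic), and extend it to a full symplectic $\F_q$-basis $\{x_1,z_1,\dots,x_n,z_n\}$ of $\F_q^{2n}$.

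Next I would define the enlarged $\F_q$-linear code
$$C_m=\spann_{\F_q}\{z_1,\dots,z_s,x_{s+1},z_{s+1},\dots,x_{s+r+1},z_{s+r+1}\},$$
which satisfies $C\subsetneq C_m$, $\dim_{\F_q} C_m=s+2r+2$, and $C_m\cap C_m^\sdual=\spann_{\F_q}\{z_1,\dots,z_s\}=D$. A direct dimension count then yields an $\F_q$-linear subsystem code with parameters $[[n,k-1,r+1,d_m]]_q$. For the minimum distance, since $D\subseteq C\subset C_m\subseteq D^\sdual$, I would argue $d_m=\swt(D^\sdual-C_m)\ge \swt(D^\sdual-C)=d$; and since $C_m\setminus C\subseteq D^\sdual\setminus C$, every nonzero element of $C_m$ has weight at least $\min\{\swt(C),\swt(D^\sdual-C)\}\ge\min\{d,d'\}$, giving the stated purity. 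Finally the case $k=1$ forces $C_m=D^\sdual$, and the purity hypothesis $\swt(C)\ge d$ together with $\swt(D^\sdual-C)=d$ gives $\swt(D^\sdual)=d$, yielding the $[[n,0,r+1,d]]_q$ code.

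The main (and essentially only) obstacle is justifying the existence of the symplectic $\F_q$-basis extending the prescribed basis of $C$; once that is in hand, the rest of the argument is a bookkeeping exercise directly analogous to the $\F_p$-linear case of Theorem \ref{th:shrinkK}. This reduces to showing that $D$ is a totally isotropic subspace and that the quotient of $D^\sdual/D$ carries a non-degenerate induced symplectic form over $\F_q$, both of which follow from standard symplectic linear algebra applied to the $\F_q$-bilinear form above.
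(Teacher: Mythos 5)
Your proposal is correct and follows exactly the route the paper takes: its proof of this theorem is literally the proof of Theorem~\ref{th:shrinkK} repeated with $\F_q$-bases in place of $\F_p$-bases, which is what you carry out (with the useful added observation that for $\F_q$-linear codes the trace-symplectic dual agrees with the $\F_q$-symplectic dual, justifying the symplectic basis argument over $\F_q$).
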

\begin{proof}
The proof is analogous to the proof of the previous theorem, except
that $\F_q$-bases are used instead of $\F_p$-bases.
\end{proof}

There exists a partial converse of Theorem~\ref{th:shrinkK}, namely if
the subsystem code is pure, then it is possible to increase the
dimension of the subsystem and decrease the dimension of the
co-subsystem while maintaining the same minimum distance.

\begin{theorem}\label{th:shrinkR}
Let $q$ be a power of a prime $p$. If there exists a pure
$((n,K,R,d))_q$ subsystem code with $R>1$, then there exists a pure
$((n,pK,R/p,d))_q$ subsystem code.
\end{theorem}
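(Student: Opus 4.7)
The plan is to mirror the proof of Theorem~\ref{th:shrinkK} but run the symplectic-basis argument in reverse: instead of enlarging the classical additive code, I would shrink it by one hyperbolic pair. Purity of the input code is the key hypothesis that makes this reversal work, since it pins down $\swt(D^\sdual)=d$ and thus controls the distance of the resulting code.

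Concretely, I would first invoke Theorem~\ref{th:oqecfq} to associate the given code with a classical additive code $C\subseteq \F_q^{2n}$ and its subcode $D=C\cap C^\sdual$, writing $q=p^m$, $|C|=p^{s+2r}$, $|D|=p^s$. Because $R=p^r>1$, I have $r\ge 1$. I would then choose an $\F_p$-symplectic basis $\{x_1,z_1,\ldots,x_{nm},z_{nm}\}$ of $\F_q^{2n}$ adapted to $C$, exactly as in the previous proof, so that
$$C = \spann_{\F_p}\{z_1,\ldots,z_s,x_{s+1},z_{s+1},\ldots,x_{s+r},z_{s+r}\},\qquad D = \spann_{\F_p}\{z_1,\ldots,z_s\}.$$
The new code is obtained by deleting the last hyperbolic pair:
$$C' = \spann_{\F_p}\{z_1,\ldots,z_s,x_{s+1},z_{s+1},\ldots,x_{s+r-1},z_{s+r-1}\}.$$
From the symplectic orthogonality relations I would compute $C'^\sdual=\spann_{\F_p}\{z_1,\ldots,z_s,x_{s+r},z_{s+r},\ldots,x_{nm},z_{nm}\}$ and hence $D':=C'\cap C'^\sdual=D$. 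With $|C'|=p^{s+2r-2}$ and $|D'|=p^s$, Theorem~\ref{th:oqecfq} yields a subsystem code with $K'=q^n/p^{s+r-1}=pK$ and $R'=p^{r-1}=R/p$, as required.

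The main obstacle is the distance argument. Since $D'=D$, I have $D'^\sdual=D^\sdual$; a dimension count using $s+r\le nm$ shows $C'\subsetneq D^\sdual$, so the new distance equals $\swt(D^\sdual\setminus C')$ by part~(a) of Theorem~\ref{th:oqecfq}. The lower bound $\ge d$ follows from purity via $\swt(D^\sdual)=d$. For the matching upper bound I would note that $C'\subseteq C$ forces $D^\sdual\setminus C\subseteq D^\sdual\setminus C'$, so a minimum-weight vector witnessing $\swt(D^\sdual\setminus C)=d$ (which exists whenever $C\ne D^\sdual$) also bounds the new distance by $d$. The edge case $C=D^\sdual$ corresponds to $K=1$ in the original code and is handled by choosing the removed hyperbolic pair to contain a weight-$d$ representative from $C\setminus D$ whenever one exists; otherwise every weight-$d$ vector of $D^\sdual$ already lies in $D\subseteq C'$ and the new distance is only larger. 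Purity of the new code is inherited immediately from $C'\subseteq C$ and $\swt(C)\ge d$.
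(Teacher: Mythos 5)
Your proposal is correct and takes essentially the same route as the paper's proof: fix a symplectic $\F_p$-basis adapted to $C$ and $D=C\cap C^\sdual$, delete one hyperbolic pair so that $D$ (and hence $D^\sdual$) is unchanged, compute $K'=pK$, $R'=R/p$, and use purity ($\swt(D^\sdual)=d$) to pin the new distance at $d$ and inherit purity. Your extra care with the edge case $C=D^\sdual$ (i.e.\ $K=1$) even goes slightly beyond the paper, whose upper-bound chain $\swt(D^\sdual\setminus C')\le\swt(D^\sdual\setminus C)=\swt(D^\sdual)$ tacitly assumes $D^\sdual\setminus C\neq\emptyset$.
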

\begin{proof}
Suppose that the $((n,K,R,d))_q$ Clifford subsystem code is associated
with a classical additive code
$$ C_m = \spann_{\F_p}\{z_1,\dots,z_s,x_{s+1},z_{s+1},\dots,
x_{s+r+1},z_{s+r+1}\}.$$ Let $D=C_m\cap C_m^\sdual$. We have
$x=|C_m|=p^{s+2r+2}$, $y=|D|=p^s$, hence $K=q^n/p^{r+s}$ and
$R=p^{r+1}$. Furthermore, $d=\swt(D^\sdual)$.

The code $$C=\spann_{\F_p}\{z_1,\dots,z_s,x_{s+1},z_{s+1},\dots,
x_{s+r},z_{s+r}\}$$ has the subcode $D=C\cap C^\sdual$. Since
$|C|=|C_m|/p^2$, the parameters of the Clifford subsystem code
associated with $C$ are $((n,pK,R/p,d'))_q$. Since $C\subset C_m$, the
minimum distance $d'$ satisfies $$d'=\swt(D^\sdual-C)\le \swt(D^\sdual
- C_m)=\swt(D^\sdual)=d.$$  On the other hand, $d'=\swt(D^\sdual-C)\ge
\swt(D^\sdual)=d$, whence $d=d'$. Furthermore, the resulting code is pure
since $d=\swt(D^\sdual)=\swt(D^\sdual-C)$.
\end{proof}

Replacing $\F_p$-bases by $\F_q$-bases in the proof of the
previous theorem yields the following variation of the previous
theorem for $\F_q$-linear subsystem codes.
\begin{theorem}\label{th:FqshrinkR}
Let $q$ be a power of a prime $p$. If there exists a pure
$\F_q$-linear $[[n,k,r,d]]_q$ subsystem code with $r>0$, then there
exists a pure $\F_q$-linear $[[n,k+1,r-1,d]]_q$ subsystem code.
\end{theorem}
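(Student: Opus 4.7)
The plan is to mimic the proof of Theorem~\ref{th:shrinkR} verbatim, but replace every use of an $\F_p$-symplectic basis by an $\F_q$-symplectic basis of $\F_q^{2n}$. This is legitimate because, for an $\F_q$-linear code $C$, the trace-symplectic dual $C^\sdual$ coincides with the $\F_q$-symplectic dual defined by the $\F_q$-bilinear form $\langle (a|b),(a'|b')\rangle = a'\!\cdot\! b - a\!\cdot\! b'$; indeed, the kernel of $\tr_{q/p}$ contains no nonzero $\F_q$-subspace, so $\langle\cdot,\cdot\rangle_s$ and $\langle\cdot,\cdot\rangle$ have the same orthogonal complement on any $\F_q$-subspace. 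Hence one can diagonalize the form on $C$ using the standard symplectic Gram--Schmidt procedure (e.g.\ \cite[Thm.~8.10.1]{cohn05}) carried out entirely over $\F_q$.

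Concretely, suppose the pure $\F_q$-linear $[[n,k,r,d]]_q$ subsystem code comes from an $\F_q$-linear additive code $C\subseteq \F_q^{2n}$ with $D=C\cap C^\sdual$, where $|C|=q^{n-k+r}$ and $|D|=q^{n-k-r}$ (in accordance with Theorem~\ref{th:oqecfq}). Choose an $\F_q$-symplectic basis of $\F_q^{2n}$ that restricts to a basis
\[
C = \spann_{\F_q}\{z_1,\ldots,z_s,x_{s+1},z_{s+1},\ldots,x_{s+t},z_{s+t}\}
\]
with $s=n-k-r$ and $t=r$, the $z_i$'s spanning the radical $D$ and the $(x_j,z_j)$'s forming hyperbolic pairs. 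Because $r>0$, we may remove one hyperbolic pair and define
\[
C' = \spann_{\F_q}\{z_1,\ldots,z_s,x_{s+1},z_{s+1},\ldots,x_{s+t-1},z_{s+t-1}\}.
\]
Then $|C'|=q^{n-k+r-2}$ and $D':=C'\cap C'^\sdual=D$, so $|D'|=q^{n-k-r}$. Plugging these into Theorem~\ref{th:oqecfq} gives new parameters $K'=q^{k+1}$ and $R'=q^{r-1}$, i.e.\ $[[n,k+1,r-1,d']]_q$ with $d'$ still to be determined.

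The remaining point, and the only genuine thing to check, is that the minimum distance stays at $d$ and that the resulting code is pure. Since $C'\subset C$, we have $D^\sdual\setminus C \subseteq D^\sdual\setminus C'$, hence $\swt(D^\sdual\setminus C') \le \swt(D^\sdual\setminus C) = d$. Conversely, every element of $D^\sdual\setminus C'$ lies either in $D^\sdual\setminus C$, where it has weight $\ge d$ by hypothesis, or in $C\setminus C'$; but $C\subseteq D^\sdual$ (since $D\subseteq C^\sdual$), and purity of the original code gives $\swt(D^\sdual\setminus\{0\})=d$, so every nonzero element of $C$ has weight $\ge d$ as well. Thus $d'=d$, and the inequality $\swt(C'\setminus\{0\})\ge d$ then exhibits the new code as pure. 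I do not expect a real obstacle here; the only conceptual step is justifying the replacement of $\F_p$ by $\F_q$ in the symplectic decomposition, which is exactly the content of the first paragraph.
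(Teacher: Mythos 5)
Your proof is correct and follows essentially the same route as the paper: the paper's own ``proof'' of Theorem~\ref{th:FqshrinkR} is simply the remark that one replaces $\F_p$-bases by $\F_q$-bases in the proof of Theorem~\ref{th:shrinkR}, which is precisely what you carry out, with the parameter and distance computations matching. Your added justification that the trace-symplectic dual of an $\F_q$-linear code coincides with its $\F_q$-symplectic dual is exactly the point the paper leaves implicit, and it is argued correctly.
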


The purity hypothesis in Theorems~\ref{th:shrinkR}
and~\ref{th:FqshrinkR} is essential, as the next remark shows.

\begin{remark}
The Bacon-Shor code is an impure $[[9,1,4,3]]_2$ subsystem
code. However, there does not exist any $[[9,5,3]]_2$ stabilizer code.
Thus, in general one cannot omit the purity assumption from
Theorems~\ref{th:shrinkR} and~\ref{th:FqshrinkR}.
\end{remark}

An $[[n,k,d]]_q$ stabilizer code can also be regarded as an
$[[n,k,0,d]]_q$ subsystem code. We record this important special case
of the previous theorems in the next corollary.

\goodbreak
\begin{corollary}\label{cor:generic}
If there exists an ($\F_q$-linear) $[[n,k,d]]_q$ stabilizer code that
is pure to $d'$, then there exists for all $r$ in the range $0\le r<k$
an ($\F_q$-linear) $[[n,k-r,r,\ge d]]_q$ subsystem code that is pure
to $\min\{d,d'\}$ .  If a pure ($\F_q$-linear) $[[n,k,r,d]]_q$
subsystem code exists, then a pure ($\F_q$-linear) $[[n,k+r,d]]_q$
stabilizer code exists.
\end{corollary}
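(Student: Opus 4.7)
The plan is to derive both directions directly by iterating the trading theorems already established in this section, using the observation that a stabilizer code is precisely a subsystem code with a trivial co-subsystem: an $[[n,k,d]]_q$ stabilizer code is the same object as an $[[n,k,0,d]]_q$ subsystem code, and pure to $d'$ in one formulation matches pure to $d'$ in the other.

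For the forward direction, I would take the given $[[n,k,0,d]]_q$ code, which by hypothesis is pure to $d'$, and apply Theorem~\ref{th:FqshrinkK} exactly $r$ times in the $\F_q$-linear case, or apply Theorem~\ref{th:shrinkK} exactly $mr$ times (where $q=p^m$) in the general case. The case $r=0$ is handled by the empty iteration. At every intermediate step the subsystem parameter is at least $k-r+1\ge 2$ because $r<k$, so the hypothesis $k>1$ (respectively $K>p$) required by the trading theorem is satisfied; each trade decreases the subsystem dimension by one step and increases the co-subsystem dimension by one step while guaranteeing minimum distance at least $d$. The purity claim carries through because after the first application the code is pure to $\min\{d,d'\}$, and subsequent applications yield $\min\{d,\min\{d,d'\}\}=\min\{d,d'\}$, so purity stabilizes at that common value. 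The final output is the desired $[[n,k-r,r,\ge d]]_q$ subsystem code pure to $\min\{d,d'\}$.

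For the converse, starting from a pure $[[n,k,r,d]]_q$ subsystem code I would apply Theorem~\ref{th:FqshrinkR} exactly $r$ times in the $\F_q$-linear case, or Theorem~\ref{th:shrinkR} exactly $mr$ times in the general case. At every step the precondition that the current co-subsystem parameter be positive is met, because that parameter decreases stepwise from $r$ down to $1$ as we iterate. Since each of these trading theorems preserves both the minimum distance and the purity, after the full iteration one reaches a pure $[[n,k+r,0,d]]_q$ subsystem code, which by the identification above is a pure $[[n,k+r,d]]_q$ stabilizer code.

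There is no real obstacle here; the corollary is essentially bookkeeping on top of the four preceding trading theorems. The one point worth flagging in the write-up is that the purity hypothesis in the converse direction cannot be dropped, as illustrated by the Bacon-Shor $[[9,1,4,3]]_2$ code in the remark just above, which justifies why the converse is stated only under the purity assumption.
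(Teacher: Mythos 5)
Your proposal is correct and follows essentially the same route as the paper, which records this corollary as a direct consequence of Theorems~\ref{th:shrinkK}--\ref{th:FqshrinkR}: iterate the dimension-trading theorems starting from the identification of an $[[n,k,d]]_q$ stabilizer code with an $[[n,k,0,d]]_q$ subsystem code. Your bookkeeping of the hypotheses at each step ($K>p$ resp.\ $k>1$ holds throughout since $r<k$, and purity stabilizes at $\min\{d,d'\}$) is exactly what the paper leaves implicit.
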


We have shown in~\cite{aly07a,aly06a} that a (primitive or non-primitive)
narrow sense BCH code of length $n$ over $\F_q$
contains its dual code if
the designed distance $\delta$ is in the range
$$2\le \delta\le \delta_{\max}=\frac{n}{q^{m}-1} (q^{\lceil
m/2\rceil}-1-(q-2)[m \textup{ odd}]).$$  For simplicity, we will
proceed our work for primitive narrow sense BCH codes, however, the
generalization for non-primitive BCH codes is a straightforward.

\begin{corollary}\label{lem:BCHExistFq}
If $q$ is  power of a prime, $m$ is a positive integer, and $2\leq
\delta\leq q^{\lceil m/2\rceil}-1 -(q-2)[m \text{ odd }]$. Then
there exists a subsystem BCH code with parameters
$[[q^m-1,n-2m\lceil(\delta-1)(1-1/q) \rceil -r, r,\geq \delta ]]_q$
where $0 \leq r< n-2m\lceil(\delta-1)(1-1/q) \rceil$.
\end{corollary}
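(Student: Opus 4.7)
The plan is to combine two ingredients already established earlier in the paper: the existence of a dual-containing classical BCH code with known dimension, and the dimension-trading mechanism of Corollary~\ref{cor:generic}.

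First, I would invoke the classical results of~\cite{aly06a,aly07a} cited just above the statement. For the given range of~$\delta$, a primitive, narrow-sense BCH code $C$ of length $n=q^m-1$ over $\F_q$ with designed distance $\delta$ is Euclidean dual-containing, i.e.\ $C^\perp\subseteq C$, and has dimension $\dim C=n-m\lceil(\delta-1)(1-1/q)\rceil$. Applying the Euclidean construction (Lemma~\ref{lem:css-Euclidean-subsys}) with the self-orthogonal code $C^\perp$ in the role of both $C$ and $D$ of that lemma (equivalently, the standard CSS recipe) produces an $\F_q$-linear stabilizer code with parameters
\[
[[\,n,\;n-2m\lceil(\delta-1)(1-1/q)\rceil,\;\ge\delta\,]]_q,
\]
since $2\dim C-n=n-2m\lceil(\delta-1)(1-1/q)\rceil$ and the minimum distance is at least the BCH designed distance~$\delta$.

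Second, I would feed this stabilizer code into Corollary~\ref{cor:generic}. Any stabilizer code is trivially pure to~$d'=1$, so the hypothesis of that corollary is satisfied. With $k=n-2m\lceil(\delta-1)(1-1/q)\rceil$, the corollary produces, for every $r$ in the range $0\le r<k$, an $\F_q$-linear subsystem code
\[
[[\,n,\;k-r,\;r,\;\ge\delta\,]]_q = [[\,q^m-1,\;n-2m\lceil(\delta-1)(1-1/q)\rceil-r,\;r,\;\ge\delta\,]]_q,
\]
which is exactly the family claimed.

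There is no genuine obstacle here; the result is essentially a bookkeeping corollary of the earlier theorems. The only point requiring a small sanity check is that the designed-distance lower bound $\delta$ carries through both steps: it survives the CSS step because the stabilizer distance is bounded below by $\wt(C\setminus C^\perp)\ge\wt(C)\ge\delta$, and it survives the dimension-trade step because Corollary~\ref{cor:generic} preserves the minimum distance (as $\ge d$). Consequently, the same proof template extends verbatim to non-primitive narrow-sense BCH codes by substituting the corresponding dimension and dual-containment results from~\cite{aly07a,aly08phd}, as remarked immediately before the statement.
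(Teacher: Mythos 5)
Your proposal is correct and follows essentially the same route as the paper: first obtain the $[[q^m-1,\,n-2m\lceil(\delta-1)(1-1/q)\rceil,\,\ge\delta]]_q$ stabilizer code from the dual-containing BCH code of~\cite{aly06a}, then trade $r$ logical qudits for gauge qudits via the dimension-trading result (the paper cites its conference version, which is Corollary~\ref{cor:generic}/Theorem~\ref{th:shrinkK} here). You merely spell out the CSS/Euclidean step and the purity-to-$1$ remark that the paper leaves implicit.
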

\begin{proof}
We know that if $2\leq \delta\leq q^{\lceil m/2\rceil}-1 -(q-2)[m
\text{ odd }]$, then there exists a stabilizer code with parameters
$[[q^m-1,n-2m\lceil(\delta-1)(1-1/q) \rceil, \geq \delta ]]_q$. Let
r be an integer in the range $0 \leq r< n-2m\lceil(\delta-1)(1-1/q)
\rceil$. From~\cite[Theorem 2]{aly08a}, then there must exist a
subsystem BCH code with parameters $
[[q^m-1,n-2m\lceil(\delta-1)(1-1/q) \rceil -r, r,\geq \delta ]]_q$.
\end{proof}

\medskip

We can also construct subsystem BCH codes from stabilizer codes
using the Hermitian constructions.

\begin{corollary}\label{lem:BCHExistFq2}
If $q$ is a power of a prime, $m$ is a positive integer, and
$\delta$ is an integer in the range $2\le \delta \le
\delta_{\max}=q^{m+[m \textup{ even}]}-1 -(q^2-2)[m \textup{
even}]$, then there exists a subsystem code $Q$ with parameters
$$ [[q^{2m}-1, q^{2m}-1-2m\lceil(\delta-1)(1-1/q^2)\rceil -r,r, d_Q\ge
\delta]]_q$$ that is pure up to $\delta$, where $0 \leq r
<q^{2m}-1-2m\lceil(\delta-1)(1-1/q^2)\rceil$.
\end{corollary}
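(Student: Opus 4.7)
The plan is to parallel the proof of Corollary~\ref{lem:BCHExistFq}, replacing the Euclidean ingredients by their Hermitian counterparts. The argument factors cleanly into an existence step for a classical dual-containing BCH code, an application of the Hermitian construction, and a dimension-trading step via Corollary~\ref{cor:generic}.

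First, I would invoke the Hermitian analogue of the existence result used in the previous corollary: by~\cite[Theorems~4 and~7]{aly06a}, the primitive, narrow-sense BCH code $C$ of length $n=q^{2m}-1$ over $\F_{q^2}$ with designed distance $\delta$ in the range $2\le\delta\le\delta_{\max}$ satisfies $C^{\hdual}\subseteq C$ and has dimension $n-m\lceil(\delta-1)(1-1/q^2)\rceil$. Feeding $C$ into the Hermitian construction (Lemma~\ref{lem:css-Hermitina-subsys}) with the self-orthogonal subcode $D=C^{\hdual}=C\cap C^{\hdual}$ produces a stabilizer code with parameters $[[n,\,n-2m\lceil(\delta-1)(1-1/q^2)\rceil,\,\ge\delta]]_q$. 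The BCH bound applied to $D^{\hdual}=C$ gives $\wt(D^{\hdual})\ge\delta$, so the resulting stabilizer code is pure up to~$\delta$.

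Second, I would apply Corollary~\ref{cor:generic} to this stabilizer code, viewing it as an $[[n,k,0,\ge\delta]]_q$ subsystem code with $k=n-2m\lceil(\delta-1)(1-1/q^2)\rceil$. This trades $r$ units of subsystem dimension for $r$ units of co-subsystem dimension while preserving both the minimum distance and the purity threshold $\min\{d,d'\}$. For every $r$ with $0\le r<k$ this yields an $[[n,\,k-r,\,r,\,\ge\delta]]_q$ subsystem code that is pure up to~$\delta$, which is precisely the claimed family.

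The argument is essentially bookkeeping; all of the genuine mathematical content lives in the cited existence result for Hermitian dual-containing BCH codes and in Corollary~\ref{cor:generic}. The only point that requires care is tracking purity through the dimension trade, but this follows immediately from the $\min\{d,d'\}$ clause of Theorem~\ref{th:shrinkK}, since both $d$ and $d'$ are at least $\delta$ for the starting stabilizer code. A minor secondary point is that the bound $\delta_{\max}=q^{m+[m\text{ even}]}-1-(q^2-2)[m\text{ even}]$ used here is the correct range for Hermitian dual containment; this is exactly the hypothesis on $\delta$ in the cited theorems of~\cite{aly06a}, so no separate verification is needed.
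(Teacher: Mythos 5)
Your proof follows essentially the same route as the paper's: establish the Hermitian dual-containing primitive narrow-sense BCH code over $\F_{q^2}$ with its dimension via \cite{aly06a}, pass to the corresponding stabilizer code, and then trade subsystem for co-subsystem dimension via Corollary~\ref{cor:generic} while tracking purity (the paper's own terse proof cites \cite[Theorem~2]{aly08a} for exactly this trading step). One cosmetic correction: to obtain the stabilizer code from Lemma~\ref{lem:css-Hermitina-subsys} you should feed in the self-orthogonal code $C^\hdual$ itself (so that $k'=k''$), not the dual-containing code $C$, for which $k'+k''=n$ violates the lemma's hypothesis and would give subsystem dimension $0$; with that adjustment the parameters, purity claim, and range of $r$ you state are all correct.
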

\begin{proof}
If $2\le \delta \le \delta_{\max}=q^{m+[m \textup{ even}]}-1
-(q^2-2)[m \textup{ even}]$, then exists a classical BCH code with
parameters $[q^m-1,q^m-1-m\lceil(\delta-1)(1-1/q)\rceil,\ge
\delta]_q$ which contains its dual code. From~\cite[Theorem
2]{aly08a}, then there must exist a subsystem code with the given
parameters.
\end{proof}

\section{MDS Subsystem Codes}\label{sec:MDSsubsys}
Recall that an $[[n,k,r,d]]_q$ subsystem code derived from an
$\F_q$-linear classical code $C\le \F_q^{2n}$ satisfies the Singleton
bound $k+r\le n-2d+2$, see~\cite[Theorem~3.6]{pre0703}. A subsystem
code attaining the Singleton bound with equality is called an MDS
subsystem code.

An important consequence of the previous theorems is the following
simple observation which yields an easy construction of subsystem codes
that are optimal among the $\F_q$-linear Clifford subsystem codes.

\begin{theorem}\label{th:pureMDS}
If there exists an $\F_q$-linear $[[n,k,d]]_q$ MDS stabilizer code,
then there exists a pure $\F_q$-linear $[[n,k-r,r,d]]_q$ MDS subsystem
code for all $r$ in the range $0\le r\le k$.
\end{theorem}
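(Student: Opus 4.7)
The plan is to reduce the statement to an iterated application of Theorem~\ref{th:FqshrinkK}, together with the Singleton bound for subsystem codes. First I would recall that a stabilizer code $[[n,k,d]]_q$ is literally the same thing as a subsystem code $[[n,k,0,d]]_q$ (this is the observation that led into Corollary~\ref{cor:generic}). Moreover, an MDS stabilizer code is automatically pure: one may cite the standard fact (going back to Rains) that any $[[n,k,d]]_q$ stabilizer code meeting the Singleton bound $k=n-2d+2$ has no nontrivial stabilizer element of weight less than $d$. Thus the starting object is a \emph{pure} $\F_q$-linear $[[n,k,0,d]]_q$ subsystem code.

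Next I would iterate Theorem~\ref{th:FqshrinkK}. At each step, starting from a pure $\F_q$-linear $[[n,k-i,i,\geq d]]_q$ code with $k-i>1$, the theorem produces a pure $\F_q$-linear $[[n,k-i-1,i+1,\geq d]]_q$ code, since purity to $d$ is preserved under $\min\{d,d\}=d$. After $r$ steps (using the second clause of Theorem~\ref{th:FqshrinkK} on the final step if $r=k$, to handle the transition from $k-i=1$ to $k-i-1=0$), I obtain a pure $\F_q$-linear $[[n,k-r,r,\geq d]]_q$ subsystem code. Because the original code is pure and we only ever propagate the witness $d$, no appeal to an unknown $d'$ is needed.

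Finally, I would invoke the Singleton bound for $\F_q$-linear Clifford subsystem codes, $(k-r)+r\leq n-2d'+2$, where $d'$ is the new minimum distance. Since $(k-r)+r=k=n-2d+2$, this gives $d'\leq d$; combined with $d'\geq d$ from the iteration, we conclude $d'=d$ and that the resulting code is MDS. The only genuinely delicate point is the purity bookkeeping: one must be sure that each application of Theorem~\ref{th:FqshrinkK} actually preserves purity (rather than merely purity to some $d'<d$), and that the boundary case $r=k$ is covered by the second clause of that theorem. Once those are in place the result is essentially immediate.
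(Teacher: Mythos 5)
Your proof is correct and takes essentially the same route as the paper's: the paper cites purity of MDS stabilizer codes, then Corollary~\ref{cor:generic} (which is exactly the iterated application of Theorem~\ref{th:FqshrinkK} that you spell out), and finally the Singleton bound to force the new distance to equal $d$. Your explicit treatment of the purity bookkeeping and the $r=k$ boundary case is, if anything, slightly more careful than the paper's appeal to the corollary, whose stated range is only $0\le r<k$.
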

\begin{proof}
An MDS stabilizer code must be pure, see~\cite[Theorem~2]{rains99}
or \cite[Corollary 60]{ketkar06}. By Corollary~\ref{cor:generic}, a
pure $\F_q$-linear $[[n,k,d]]_q$ stabilizer code implies the
existence of an $\F_q$-linear $[[n,k-r,r, d_r\ge d]]_q$ subsystem
code that is pure to~$d$ for any $r$ in the range $0\le r\le k$.
Since the stabilizer code is MDS, we have $k=n-2d+2$. By the
Singleton bound, the parameters of the resulting $\F_q$-linear
$[[n,n-2d+2-r,r,d_r]]_q$ subsystem codes must satisfy
$(n-2d+2-r)+r\le n-2d_r+2$, which shows that the minimum distance
$d_r=d$, as claimed.
\end{proof}

\begin{remark}
We conjecture that $\F_q$-linear MDS subsystem codes are actually
optimal among all subsystem codes, but a proof that the Singleton
bound holds for general subsystem codes remains elusive.
\end{remark}
\smallskip

In the next corollary, we give a few examples of MDS subsystem codes
that can be obtained from Theorem~\ref{th:pureMDS}. These are the
first families of MDS subsystem codes (though sporadic examples of MDS
subsystem codes have been established before, see e.g.~\cite{aly06c,bacon06}).
\begin{corollary}
\begin{enumerate}[i)]
\item An $\F_q$-linear pure $[[n,n-2d+2-r,r,d]]_q$ MDS subsystem code exists
for all $n$, $d$, and $r$ such that $3\le n\le q$, $1\le d\le n/2+1$,
and\/ $0\le r\le n-2d+1$.
\item An $\F_q$-linear pure $[[(\nu+1)q,(\nu+1)q-2\nu-2-r,r,\nu+2]]_q$ MDS subsystem code exists for all $\nu$ and $r$ such that $0\le \nu\le q-2$ and
$0\le r\le (\nu+1)q-2\nu-3$.
\item An $\F_q$-linear pure $[[q - 1, q-1 -2\delta -r,
r,\delta + 1]]_q$ MDS subsystem code exists for all
$\delta$ and $r$ such that $0 \leq \delta < (q -1)/2$ and $0\leq r \le q - 2\delta - 1$.
\item An $\F_q$-linear pure $[[q, q -
2\delta - 2-r',r', \delta + 2]]_q$ MDS subsystem code exists for all
$0 \leq \delta < (q -1)/2$ and $0\leq r' <q - 2\delta - 2$.
\item An $\F_q$-linear pure $[[q^2 - 1, q^2 - 2\delta - 1-r,r, \delta +
1]]_q$ MDS subsystem code exists for all $\delta$ and $r$ in the range
$0 \leq \delta < q-1$ and $0\leq r< q^2 - 2\delta - 1$.
\item An $\F_q$-linear pure $[[q^2, q^2 - 2\delta -
2-r',r', \delta + 2]]_q$ MDS subsystem code exists for all $\delta$ and $r'$ in the range
$0 \leq \delta < q-1$ and $0\leq r' <q^2 - 2\delta - 2$.
\end{enumerate}
\end{corollary}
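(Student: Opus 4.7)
The plan is to apply Theorem~\ref{th:pureMDS} six times, once for each family. Since Theorem~\ref{th:pureMDS} converts every $\F_q$-linear $[[n,k,d]]_q$ MDS stabilizer code into a pure $\F_q$-linear $[[n,k-r,r,d]]_q$ MDS subsystem code for each $0\le r\le k$, the entire corollary reduces to exhibiting, for each of the six parameter sets, a parent MDS stabilizer code with the stated length $n$ and distance $d$. The MDS condition forces the parent dimension to be $k=n-2d+2$, and one then checks that the range on $r$ stated in each part of the corollary is contained in $0\le r\le n-2d+2$ (the minor off-by-one differences between $r\le $ and $r<$ are just the question of whether the degenerate endpoint $r=k$ is included).

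For parts (i)--(iv), which concern lengths $n\le q+1$ and $n=(\nu+1)q$, the parent MDS stabilizer codes are produced by the Euclidean construction (Lemma~\ref{lem:css-Euclidean-subsys} with trivial co-subsystem) applied to classical (generalized, extended, or related) Reed--Solomon codes over $\F_q$ that contain their Euclidean dual. For example, part (i) uses a GRS code of length $3\le n\le q$ and appropriate dimension to obtain the parent $[[n,n-2d+2,d]]_q$; parts (iii) and (iv) use RS and extended RS codes of length $q-1$ and $q$ to produce $[[q-1,q-1-2\delta,\delta+1]]_q$ and $[[q,q-2\delta-2,\delta+2]]_q$; part (ii) uses the MDS stabilizer family of length $(\nu+1)q$ obtained from the corresponding extended/augmented Reed--Solomon construction to produce $[[(\nu+1)q,(\nu+1)q-2\nu-2,\nu+2]]_q$. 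For parts (v) and (vi), where $n\in\{q^2-1,q^2\}$, I would instead use the Hermitian construction (Lemma~\ref{lem:css-Hermitina-subsys}) applied to Reed--Solomon and extended Reed--Solomon codes over $\F_{q^2}$ that contain their Hermitian dual, which yields the parent MDS stabilizer codes $[[q^2-1,q^2-1-2\delta,\delta+1]]_q$ and $[[q^2,q^2-2\delta-2,\delta+2]]_q$. All six parent MDS stabilizer code families are well documented in the quantum coding literature, see for instance~\cite{ketkar06,aly06c}.

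The approach is thus purely modular: Theorem~\ref{th:pureMDS} does all of the work of trading subsystem dimension for co-subsystem dimension while preserving the Singleton bound, and one only needs a catalogue of MDS stabilizer parents. The main obstacle is bookkeeping rather than any conceptual difficulty: for each family one must verify that the designed distance falls inside the range that guarantees self-orthogonality (Euclidean or Hermitian) of the classical parent, that the classical parent is MDS, and that the dimension arithmetic after invoking the corresponding construction lemma matches the stated $n-2d+2$. Once each parent is in hand, invoking Theorem~\ref{th:pureMDS} with $0\le r\le n-2d+2$ immediately delivers the pure $\F_q$-linear MDS subsystem family claimed in each part of the corollary.
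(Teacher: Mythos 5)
Your overall strategy---reduce everything to exhibiting an $\F_q$-linear MDS stabilizer parent and then invoke Theorem~\ref{th:pureMDS}---is exactly the paper's, and for parts iii)--vi) your Reed--Solomon/extended Reed--Solomon parents (Euclidean over $\F_q$ for lengths $q-1$ and $q$, Hermitian over $\F_{q^2}$ for lengths $q^2-1$ and $q^2$) match the codes the paper imports from \cite[Theorems~9 and~10]{grassl04}. The one thing you gloss over there is that Theorem~\ref{th:pureMDS} requires the parent stabilizer code to be $\F_q$-linear; for parents obtained through the Hermitian construction this needs Lemma~\ref{l:hermitian-linear} (or the explicit $\F_q$-linearity statement in \cite{grassl04}), which you should cite.

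The genuine gaps are in parts i) and ii). For i) you assert that for every $3\le n\le q$ and $1\le d\le n/2+1$ there is a Euclidean dual-containing GRS code over $\F_q$ of the appropriate dimension; this is not a routine fact and you give no construction---the paper's source for this family, \cite[Theorem~14]{grassl04}, establishes existence only by a non-constructive Gilbert--Varshamov-type counting argument, as the paper itself emphasizes immediately after the corollary. For ii) your proposed route cannot work when $\nu\ge 1$: a pure $[[(\nu+1)q,(\nu+1)q-2\nu-2,\nu+2]]_q$ code coming from the Euclidean construction would force the underlying classical code to be an MDS code over $\F_q$ of length $(\nu+1)q$, dimension $(\nu+1)q-\nu-1$ and distance $\nu+2\ge 3$, and nontrivial MDS codes over $\F_q$ have length at most $q+2$, so no such classical code exists. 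The parent must instead come from the quadratic extension: the paper uses the $[[(\nu+1)q,(\nu+1)q-2\nu-2,\nu+2]]_q$ quantum Reed--Muller codes of \cite[Theorem~5]{klappenecker050}, which are built from Hermitian self-orthogonal $\F_{q^2}$-linear codes, together with Lemma~\ref{l:hermitian-linear} to secure the $\F_q$-linearity needed by Theorem~\ref{th:pureMDS}; these parents are Reed--Muller, not extended Reed--Solomon, codes. So parts iii)--vi) of your argument stand modulo the linearity remark, but i) and ii) need to be repaired by citing (or actually proving) the correct parent families.
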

\begin{proof}
\begin{inparaenum}[i)]
\item By \cite[Theorem~14]{grassl04}, there exist $\F_q$-linear
$[[n,n-2d+2,d]]_q$ stabilizer codes for all $n$ and $d$ such that
$3\le n\le q$ and $1\le d\le n/2+1$. The claim follows from Theorem~\ref{th:pureMDS}. \\
\item By \cite[Theorem~5]{klappenecker050}, there exist a
$[[(\nu+1)q,(\nu+1)q-2\nu-2,\nu+2]]_q$ stabilizer code. In this case,
the code is derived from an $\F_{q^2}$-linear code $X$ of length $n$
over $\F_{q^2}$ such that $X\subseteq X^\hdual$. The claim follows
from Lemma~\ref{l:hermitian-linear} and Theorem~\ref{th:pureMDS}.\\
\item$\!\!$,\item There exist $\F_q$-linear stabilizer codes with
parameters $[[q - 1, q - 2\delta - 1,\delta + 1]]_q$ and $[[q, q -
2\delta - 2, \delta + 2]]_q$ for $0 \leq \delta < (q -1)/2$, \
see~\cite[Theorem~9]{grassl04}. Theorem~\ref{th:pureMDS} yields the claim. \\
\item$\!\!$,\item There exist $\F_q$-linear stabilizer codes with
parameters $[[q^2 - 1, q^2 - 2\delta - 1, \delta + 1]]_q$ and $[[q^2,
q^2 - 2\delta - 2, \delta + 2]]_q$.  for $0 \leq \delta < q-1$
by~\cite[Theorem~10]{grassl04}. The claim follows from
Theorem~\ref{th:pureMDS}.
\end{inparaenum}
\end{proof}
The existence of the codes in i) are merely established by a
non-constructive Gilbert-Varshamov type counting argument.  However,
the result is interesting, as it asserts that there exist for example
$[[6,1,1,3]]_q$ subsystem codes for all prime powers $q\ge 7$,
$[[7,1,2,3]]_q$ subsystem codes for all prime powers $q\ge 7$, and
other short subsystem codes that one should compare with a
$[[5,1,3]]_q$ stabilizer code. If the syndrome calculation is simpler,
then such subsystem codes could be of practical value.

The subsystem codes given in ii)-vi) of the previous corollary are
constructively established. The subsystem codes in ii) are derived
from Reed-Muller codes, and in iii)-vi) from Reed-Solomon codes.
There exists an overlap between the parameters given in ii) and in
iv), but we list here both, since each code construction has its own
merits.
\begin{remark}
By Theorem~\ref{th:FqshrinkR}, pure MDS subsystem codes can always
be derived from MDS stabilizer codes, see Table~\ref{table:optimalMDS}. Therefore,
one can derive in fact all possible parameter sets of pure MDS
subsystem codes with the help of Theorem~\ref{th:pureMDS}.
\end{remark}
\begin{remark}
In the case of stabilizer codes, all MDS codes must be pure. For
subsystem codes this is not true, as the $[[9,1,4,3]]_2$ subsystem
code shows. Finding such impure $\F_q$-linear $[[n,k,r,d]]_q$ MDS
subsystem codes with $k+r=n-2d+2$ is a particularly interesting
challenge.
\end{remark}

\begin{table}[t]
\caption{Optimal pure subsystem codes} \label{table:optimalMDS}
\begin{center}
\begin{tabular}{|c|c|c|}
\hline
\text{Subsystem Codes} &  \text{Parent}  \\
 &  \text{Code (RS Code)}  \\
\hline $[[8  ,1  ,5  ,2  ]]_3$ & $[8  ,6  ,3  ]_{3^2}$ \\{}

$[[8  ,4  ,2  ,2  ]]_3$&$[8  ,3  ,6  ]_{3^2}$\\{} $[[8  ,5  ,1  ,2
]]_3$&$[8 ,2 ,7 ]_{3^2}$\\{}

$[[9  ,1 ,4 ,3 ]]_3$&$[9  ,6 ,4 ]_{3^2}^{\dag}, \delta=3$\\
$[[9  ,4 ,1 ,3 ]]_3$&$[9  ,3 ,7 ]_{3^2}^{\dag}, \delta=6$ \\
 \hline $[[15,1,10,3]]_4$ &
$[15 ,12 ,4 ]_{4^2}$  \\{}
$[[15,9,2,3]]_4$&$[15,4,12]_{4^2}$\\{} $[[15,10,1,3]]_4$&$[15,3,13]_{4^2}$\\
$[[16 ,1 ,9 ,4 ]]_4$& $[16 ,12,5 ]_{4^2}^{\dag},\delta= 4 $\\

  \hline $[[24,1,17,4]]_5$
&$[24,20,5]_{5^2}$
\\{}
$ [[24,16,2,4]]_5$ &$[24,5,20]_{5^2}$\\{} $[[24,17 ,1,4 ]]_5
$&$[24,4,21]_{5^2}$\\{}
 $[[24,19,1,3]]_5$ &$[24,3,22]_{5^2}$\\{}
$[[24 ,21 ,1  ,2  ]]_5$ & $ [24 ,2  ,23 ]_{5^2}$ \\{} $[[23 ,1 ,18,3
]]_5$&$[23 ,20,4 ]_{5^2}^{*}, \delta=5$\\{}
$[[23 ,16,3 ,3 ]]_5$&$[23 ,5 ,19]_{5^2}^{*}, \delta=20$\\
 \hline
 $[[48 ,1  ,37 ,6  ]]_7$  &$[48 ,42 ,7 ]_{7^2}$\\
 \hline
\end{tabular}
\\
* Punctured code\\
$\dag$ Extended code
\end{center}
 \end{table}

Recall that a pure subsystem code is called perfect if and only if it
attains the Hamming bound with equality. We conclude this section with
the following consequence of Theorem~\ref{th:pureMDS}:
\begin{corollary}
If there exists an $\F_q$-linear pure $[[n,k,d]]_q$ stabilizer code
that is perfect, then there exists a pure $\F_q$-linear
$[[n,k-r,r,d]]_q$ perfect subsystem code for all $r$ in the range
$0\leq r \leq k$.
\end{corollary}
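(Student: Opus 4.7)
The plan is to feed the hypothesized stabilizer code into Corollary~\ref{cor:generic} and then verify that the resulting subsystem code still saturates the Hamming bound. First, since the $[[n,k,d]]_q$ stabilizer code is $\F_q$-linear and pure (purity being automatic here: a perfect stabilizer code meets the sphere-packing bound, so its generating set contains no nonzero vectors of symplectic weight less than $d$), Corollary~\ref{cor:generic} directly yields, for every $r$ with $0\le r\le k$, a pure $\F_q$-linear $[[n,k-r,r,d_r]]_q$ subsystem code with $d_r\ge d$, pure to~$d$.

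The second step is the arithmetic input. Recall that a pure $((n,K,R,d))_q$ subsystem code satisfies the quantum Hamming bound
\[
KR\cdot\sum_{i=0}^{\lfloor(d-1)/2\rfloor}\binom{n}{i}(q^2-1)^i\;\le\;q^n.
\]
For the original perfect stabilizer code, $K=q^k$, $R=1$, and the bound holds with equality at distance~$d$. For the derived subsystem code, $K'R'=q^{k-r}\cdot q^r=q^k$ is \emph{identical} to the total dimension of the stabilizer code, so the left-hand side of the Hamming bound is the same function of the packing radius in both cases.

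Third, I would combine purity with the Hamming bound applied at the (a~priori larger) radius $\lfloor(d_r-1)/2\rfloor$: the inequality $q^k\cdot V_{q^2}(n,\lfloor(d_r-1)/2\rfloor)\le q^n$ together with the equality $q^k\cdot V_{q^2}(n,\lfloor(d-1)/2\rfloor)=q^n$ forces $\lfloor(d_r-1)/2\rfloor=\lfloor(d-1)/2\rfloor$; the subsystem code therefore has the same error-correcting radius as the parent, meets the Hamming bound with equality, and is perfect. Taking the minimum distance to be the value~$d$ inherited from the stabilizer code (which lies within the same packing-radius class), we obtain the asserted $[[n,k-r,r,d]]_q$ perfect pure subsystem code.

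The only delicate point is the last one: in principle $d_r$ could exceed $d$ without changing $\lfloor(d_r-1)/2\rfloor$, so one must be comfortable reporting the distance as~$d$. This is the same convention used for perfect stabilizer codes (and is exactly what makes the Hamming bound the relevant notion of optimality here), so no genuine obstruction arises; the argument is otherwise a direct application of Corollary~\ref{cor:generic} combined with the quantum sphere-packing identity.
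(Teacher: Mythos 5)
Your argument is correct and follows essentially the paper's intended route: the corollary is stated there without proof as a consequence of Theorem~\ref{th:pureMDS}, i.e., feed the pure $\F_q$-linear stabilizer code into Corollary~\ref{cor:generic} and then replace the Singleton-bound step by the Hamming bound for pure subsystem codes (with $K'R'=q^{k-r}\cdot q^r=q^k$ unchanged) to conclude the packing radius cannot grow and the bound is still met with equality, which is exactly what you do. One small caveat: your parenthetical claim that purity is automatic for a perfect stabilizer code is not justified (the quantum Hamming bound is only known to hold for pure codes, so perfectness cannot be used to infer purity), but this is harmless since purity is already part of the hypothesis.
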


\section{Extending and Shortening Subsystem Codes}\label{sec:extendshortensubsys}
In Section~\ref{sec:dimensions}, we showed how one can derive new
subsystem codes from known ones by modifying the dimension of the
subsystem and co-subsystem. In this section, we derive new subsystem
codes from known ones by extending and shortening the length of the
code.

\begin{theorem}\label{lemma_n+1k}
If there exists an $((n,K,R,d))_q$ Clifford subsystem  code
with $K>1$, then there exists an $((n+1, K, R, \ge d))_q$
subsystem code that is pure to~1.
\end{theorem}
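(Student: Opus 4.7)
The plan is to work with the additive code $C \subseteq \F_q^{2n}$ associated by Theorem~\ref{th:oqecfq} to the given Clifford subsystem code, where $x = |C|$, $y = |D|$ with $D = C \cap C^\sdual$, $K = q^n/(xy)^{1/2}$, $R = (x/y)^{1/2}$, and $d = \swt(D^\sdual - C)$. To land at parameters $((n+1, K, R, \ge d))_q$ after appending one extra coordinate, the new additive code $C' \subseteq \F_q^{2(n+1)}$ must satisfy $|C'| = qx$ and $|C' \cap C'^\sdual| = qy$. The natural candidate lets the new coordinate contribute an arbitrary $X$-component while being pinned to $0$ on the $Z$-side:
$$C' = \{(c_1,\dots,c_n,a \mid c_1',\dots,c_n',0) : (c|c') \in C,\ a \in \F_q\}.$$

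The execution proceeds in three steps. First, a direct cardinality count gives $|C'| = qx$, and a short computation of the trace-symplectic pairing shows
$$C'^\sdual = \{(v,v_{n+1}|w,0) : (v|w) \in C^\sdual,\ v_{n+1} \in \F_q\},$$
because orthogonality against the new $X$-direction forces $w_{n+1} = 0$ while leaving $v_{n+1}$ free. Intersecting with $C'$ yields $D' = \{(c,a|c',0) : (c|c') \in D,\ a \in \F_q\}$ of size $qy$, so $K' = q^{n+1}/(qx\cdot qy)^{1/2} = K$ and $R' = (qx/qy)^{1/2} = R$. Second, the same pairing calculation applied to $D$ gives $D'^\sdual = \{(v,v_{n+1}|w,0) : (v|w) \in D^\sdual,\ v_{n+1} \in \F_q\}$; such a vector lies in $C'$ iff $(v|w) \in C$, so $D'^\sdual - C'$ is parametrized by $(v|w) \in D^\sdual - C$ together with an arbitrary $v_{n+1} \in \F_q$. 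Its symplectic weight is $\swt(v|w) + [v_{n+1} \ne 0] \ge d$, and taking $v_{n+1}=0$ shows the minimum is exactly $d$, so $d' \ge d$. Third, for purity: the vector $(0,\dots,0,1|0,\dots,0)$ belongs to $C'$ and has symplectic weight $1$, which establishes the pure-to-$1$ claim (and shows the code need not be pure to any higher value).

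The one checkpoint requiring care is which branch of the distance formula in Theorem~\ref{th:oqecfq} applies: the argument above tacitly assumes $d = \swt(D^\sdual - C)$, i.e., $D^\sdual \ne C$. This is precisely where the hypothesis $K > 1$ enters, since $D^\sdual = C$ is equivalent to $|C|\cdot |D| = q^{2n}$ and hence $K = 1$. Under $K > 1$ both $D^\sdual - C$ and the analogous $D'^\sdual - C'$ are nonempty (the latter by the same cardinality check applied to $C'$), so the distance formula is meaningful and the chain of inequalities goes through. Everything else is routine symplectic linear algebra.
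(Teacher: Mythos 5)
Your proof is correct and follows essentially the same route as the paper: you append a free $X$-coordinate and a zero $Z$-coordinate to $C$, verify $(C')^{\sdual}=(C^{\sdual})'$, compute $K'=K$, $R'=R$, bound the new distance by $d$, and exhibit the weight-one vector for purity to~1. The only difference is cosmetic---you compute the dual directly from the pairing instead of containment plus a cardinality count, and you spell out why $K>1$ keeps both codes in the $D^{\sdual}\neq C$ branch of the distance formula, a point the paper leaves implicit.
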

\begin{proof}
We first note that for any additive subcode $X\le \F_q^{2n}$, we can
define an additive code
$X'\le \F_q^{2n+2}$ by
$$X'=\{ (a\alpha|b0)\,|\, (a|b)\in X, \alpha\in
\F_q\}.$$ We have $|X'|=q|X|$. Furthermore, if $(c|d)\in X^\sdual$,
then $(c\alpha|d0)$ is contained in $(X')^\sdual$ for all $\alpha$ in
$\F_q$, whence $(X^\sdual)'\subseteq (X')^\sdual$. By comparing
cardinalities we find that equality must hold; in other words, we have
$$(X^\sdual)'= (X')^\sdual.$$

By Theorem~\ref{th:oqecfq}, there are two additive codes $C$ and $D$
associated with an $((n,K,R,d))_q$ Clifford subsystem code such that
$$|C|=q^nR/K$$ and $$|D|=|C\cap C^\sdual| = q^n/(KR).$$ We can derive from the
code $C$ two new additive codes of length $2n+2$ over $\F_q$, namely $C'$ and
$D'=C'\cap (C')^\sdual$. The codes $C'$ and $D'$ determine a
$((n+1,K',R',d'))_q$ Clifford subsystem code. Since
\begin{eqnarray*}
D'&=&C'\cap (C')^\sdual = C'\cap (C^\sdual)' \\&=&(C\cap C^\sdual)',
\end{eqnarray*}
 we have
$|D'|=q|D|$. Furthermore, we have $|C'|=q|C|$. It follows from
Theorem~\ref{th:oqecfq} that
\begin{compactenum}[(i)]
\item $K'= q^{n+1}/\sqrt{|C'||D'|}=q^n/\sqrt{|C||D|}=K$,
\item $R'=(|C'|/|D'|)^{1/2} = (|C|/|D|)^{1/2} = R$,
\item $d'= \swt( (D')^\sdual \setminus C')\ge \swt( (D^\sdual\setminus C)')=d$.
\end{compactenum}
Since $C'$ contains a vector $(\mathbf{0}\alpha|\mathbf{0}0)$ of
weight $1$, the resulting subsystem code is pure to~1.
\end{proof}


\begin{corollary}
If there exists an $[[n,k,r,d]]_q$ subsystem  code with $k>0$ and $0\leq r <k$,
then there exists an $[[n+1, k, r, \ge d]]_q$ subsystem code that is pure to~1.
\end{corollary}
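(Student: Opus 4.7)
The corollary is essentially the bracket-notation restatement of Theorem~\ref{lemma_n+1k}, so my plan is to invoke that theorem directly after translating the parameters. Recall that by definition $[[n,k,r,d]]_q$ is shorthand for $((n,q^k,q^r,d))_q$. The hypothesis $k>0$ therefore gives $K=q^k\ge q>1$, which is exactly the condition $K>1$ required by Theorem~\ref{lemma_n+1k}.

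Applying Theorem~\ref{lemma_n+1k} with $K=q^k$ and $R=q^r$, we obtain an $((n+1,q^k,q^r,\ge d))_q$ Clifford subsystem code that is pure to~$1$. Rewriting this in bracket notation yields an $[[n+1,k,r,\ge d]]_q$ subsystem code, and the purity to~$1$ is inherited without change. The hypothesis $0\le r<k$ does not play an active role in the argument; it is recorded in the statement presumably to guarantee that the input code is a nontrivial subsystem code (as opposed to a stabilizer code with $r=0$ being treated differently, or degenerate regimes), but the construction of Theorem~\ref{lemma_n+1k} works verbatim as long as $K>1$.

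Since there is no real obstacle here, the whole proof is one sentence: apply Theorem~\ref{lemma_n+1k} to the associated additive codes and translate back. The only thing to be mildly careful about is making sure the reader sees that $k>0$ implies $K=q^k>1$, since Theorem~\ref{lemma_n+1k} is stated in the more general $((n,K,R,d))_q$ framework where $K$ need not be a power of $q$. No symplectic-geometry bookkeeping needs to be redone, because the padding construction $X\mapsto X'=\{(a\alpha|b0):(a|b)\in X,\ \alpha\in\F_q\}$ used in the proof of Theorem~\ref{lemma_n+1k} preserves $\F_q$-linearity when $X$ is $\F_q$-linear, so the output code remains in the $\F_q$-linear (bracket-notation) class.
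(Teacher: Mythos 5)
Your proposal is correct and takes essentially the same approach as the paper, which states the corollary as an immediate bracket-notation restatement of Theorem~\ref{lemma_n+1k}: since $k>0$ gives $K=q^k>1$, the theorem applies directly and returns an $((n+1,q^k,q^r,\ge d))_q$ code pure to~1, i.e., an $[[n+1,k,r,\ge d]]_q$ code. One minor note: the bracket notation only requires $K$ and $R$ to be powers of $q$ (not $\F_q$-linearity), so your closing remark about the padding construction preserving linearity is unnecessary, though harmless.
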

\medskip

We can also shorten the length of a subsystem code in a simple way
as shown in the following Theorem.

\begin{theorem}\label{lem:n-1k+1rule}
If a pure $((n,K,R,d))_q$ subsystem code exists, then there exists
a pure $((n-1,qK,R,d-1))_q$ subsystem code.
\end{theorem}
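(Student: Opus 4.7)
My plan is to realize the length reduction as a \emph{symplectic shortening}: drop the $n$-th and $2n$-th coordinates after restricting to the subcode whose entries in those positions vanish. Let $\phi\colon \F_q^{2n}\to\F_q^{2(n-1)}$ be the projection discarding those two coordinates, and $\iota\colon \F_q^{2(n-1)}\to\F_q^{2n}$ be the section that inserts zeros there, so $\ker\phi$ is the two-dimensional subspace supported on positions $n$ and $2n$, and $\phi\circ\iota$ is the identity. Let $C\le\F_q^{2n}$ and $D=C\cap C^\sdual$ be the additive codes associated with the given subsystem code via Theorem~\ref{th:oqecfq}, with $|C|=q^nR/K$ and $|D|=q^n/(KR)$. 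Put $C_0=C\cap\iota(\F_q^{2(n-1)})$ and $D_0=D\cap\iota(\F_q^{2(n-1)})$, and take the shortened candidate $C':=\phi(C_0)\le\F_q^{2(n-1)}$.

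The first step is a size count, and this is exactly where purity enters. Since $D\subseteq C$ we have $C^\sdual\subseteq D^\sdual$, and purity with $d\ge 2$ gives $\swt(C^\sdual)\ge\swt(D^\sdual)\ge d\ge 2$; in particular, neither of those codes contains a vector of symplectic weight~$1$. Equivalently, the projections of $C$ and of $D$ onto the last coordinate pair $(a_n,b_n)\in\F_q^2$ are both surjective, so $|C_0|=|C|/q^2$ and $|D_0|=|D|/q^2$. Moreover $\phi$ is injective on both $C_0$ and $D^\sdual$ because $\ker\phi\cap\iota(\F_q^{2(n-1)})=\{0\}$ and $D^\sdual\cap\ker\phi=\{0\}$, whence $|C'|=|C|/q^2$ and $|\phi(D_0)|=|D|/q^2$.

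Next I identify the dual. The standard duality $(A\cap B)^\sdual=A^\sdual+B^\sdual$ applied with $B=\iota(\F_q^{2(n-1)})$ (whose symplectic dual is $\ker\phi$) yields $C_0^\sdual=C^\sdual+\ker\phi$, and translating through $\phi$ this gives $(C')^\sdual=\phi(C^\sdual)$. To pin down $D':=C'\cap(C')^\sdual$, a cardinality computation is cleanest: $|\phi(C^\sdual)|=|C^\sdual|=q^nK/R$ (using $C^\sdual\cap\ker\phi=\{0\}$), and $|C'+(C')^\sdual|=|\phi(C_0+C^\sdual)|=|C_0+C^\sdual|=|C_0|\,|C^\sdual|/|D_0|=q^nKR$. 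Therefore $|D'|=|C'|\,|(C')^\sdual|/|C'+(C')^\sdual|=q^{n-2}/(KR)=|\phi(D_0)|$, forcing the evident inclusion $\phi(D_0)\subseteq D'$ to be an equality. Plugging into Theorem~\ref{th:oqecfq} gives subsystem dimension $q^{n-1}/\sqrt{|C'||D'|}=qK$ and co-subsystem dimension $\sqrt{|C'|/|D'|}=R$.

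Finally for the minimum distance, applying the same duality identity to $D'=\phi(D_0)$ yields $(D')^\sdual=\phi(D^\sdual)$. Every nonzero $w\in D^\sdual$ has $\swt(w)\ge d$ by purity, and $\phi$ can lower the symplectic weight by at most one, so $\swt(\phi(w))\ge d-1$; hence $\swt((D')^\sdual-C')\ge d-1$. Purity of the new code is immediate since $\swt(C')=\swt(C_0)\ge\swt(C)\ge d>d-1$. The main obstacle throughout is the identification $C'\cap(C')^\sdual=\phi(D_0)$: a set-theoretic proof is awkward because a $C^\sdual$-representative of an element of the intersection need not have vanishing last coordinate, so I prefer the dimension count above, which relies crucially on the purity assumption at the step $C^\sdual\cap\ker\phi=\{0\}$.
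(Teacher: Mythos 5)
Your proof is correct, but it follows a genuinely different route from the paper's. The paper argues by reduction: purity lets it pass from the $((n,K,R,d))_q$ subsystem code to a pure $((n,KR,d))_q$ stabilizer code by \cite[Lemma~10]{aly06c}, then it invokes the known stabilizer shortening to $((n-1,qKR,d-1))_q$ from \cite[Lemma~70]{ketkar06}, regards that as an $((n-1,qKR,1,d-1))_q$ subsystem code, and finally trades dimensions between subsystem and co-subsystem (the results of Section~\ref{sec:dimensions}) to recover co-subsystem dimension $R$. You instead shorten directly at the level of the associated classical pair $(C,D)$, never leaving the subsystem setting: purity enters twice, once to make the projection onto the deleted coordinate pair surjective (so $|C_0|=|C|/q^2$ and $|D_0|=|D|/q^2$), and once to give $D^\sdual\cap\ker\phi=\{0\}$, which controls the duals and the intersection. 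This buys a self-contained argument (in effect you reprove the stabilizer shortening lemma in the more general subsystem form, with $R$ preserved automatically), at the cost of more bookkeeping; the paper's proof is shorter but outsources the substance to the cited lemmas. One step in your write-up needs an explicit line: the equality $|\phi(C_0+C^\sdual)|=|C_0+C^\sdual|$ requires $\phi$ to be injective on $C_0+C^\sdual$, which does not follow from injectivity on $C_0$ and on $C^\sdual$ separately; it does follow from the inclusion $C_0+C^\sdual\subseteq C+C^\sdual=D^\sdual$ together with your own observation $D^\sdual\cap\ker\phi=\{0\}$, and the same inclusion is what settles the identification $C'\cap(C')^\sdual=\phi(D_0)$ that you flag as the delicate point. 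With that line added, the counts $|D'|=q^{n-2}/(KR)$, $K'=qK$, $R'=R$, the identity $(D')^\sdual=\phi(D^\sdual)$, and the bounds $\swt((D')^\sdual-C')\ge d-1$ and $\swt(C')\ge d$ are all sound; the implicit assumptions ($d\ge 2$, and $K'=qK>1$ so the distance is indeed $\swt((D')^\sdual-C')$) are consistent with the theorem as intended.
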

\begin{proof}
By \cite[Lemma~10]{aly06c}, the existence of a pure Clifford subsystem
code with parameters $((n,K,R,d))_q$ implies the existence of a pure
$((n,KR,d))_q$ stabilizer code. It follows from~\cite[Lemma
70]{ketkar06} that there exist a pure $((n-1,qKR,d-1))_q$ stabilizer
code, which can be regarded as a pure $((n-1,qKR,1,d-1))_q$ subsystem
code. Thus, there exists a pure $((n-1,qK,R,d-1))_q$ subsystem code by
Theorem~\ref{th:shrinkR}, which proves the claim.
\end{proof}

In bracket notation, the previous theorem states that the existence of
a pure $[[n,k,r,d]]_q$ subsystem code implies the existence of a pure
$[[n-1,k+1,r,d-1]]_q$ subsystem code.

\section{Combining Subsystem Codes}\label{sec:combinesubsys}
In this section, we show how one can obtain a new subsystem code by
combining two given subsystem codes in various ways.

\begin{theorem}\label{thm:twocodes_n1k1r1d1n2k2r2d2}
If there exists a pure $[[n_1,k_1,r_1,d_1]]_2$ subsystem code and a
pure $[[n_2,k_2,r_2,d_2]]_2$ subsystem code such that $k_2+r_2\leq n_1$,
then there exist subsystem codes with parameters
$$[[n_1+n_2-k_2-r_2,k_1+r_1-r,r,d]]_2$$
for all $r$ in the range $0\le r< k_1+r_1$, where
the minimum distance $d \geq \min\{d_1,d_1+d_2-k_2-r_2\}$.
\end{theorem}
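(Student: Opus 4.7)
The plan is to reduce the construction to the stabilizer-code level using Corollary~\ref{cor:generic}, apply a classical combining construction for stabilizer codes, and then split the resulting dimension back into subsystem and co-subsystem.

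First, since the given $[[n_1,k_1,r_1,d_1]]_2$ and $[[n_2,k_2,r_2,d_2]]_2$ subsystem codes are pure, Corollary~\ref{cor:generic} promotes each one into a pure stabilizer code: namely, a pure $[[n_1,k_1+r_1,d_1]]_2$ stabilizer code $Q_1$ and a pure $[[n_2,k_2+r_2,d_2]]_2$ stabilizer code $Q_2$. This reduces the task to combining two pure stabilizer codes whose encoded dimensions are $\kappa_1=k_1+r_1$ and $\kappa_2=k_2+r_2$, under the hypothesis $\kappa_2 \le n_1$.

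Second, I would invoke the standard stabilizer-code combining construction (an analogue of the classical $(u\mid u+v)$-type glue construction, as found in the stabilizer-code literature such as \cite{calderbank98,ketkar06}), which states that two pure stabilizer codes with parameters $[[n_i,\kappa_i,\delta_i]]_2$, $i=1,2$, satisfying $\kappa_2 \le n_1$, can be combined into a single pure stabilizer code with parameters
\[
[[n_1+n_2-\kappa_2,\;\kappa_1,\;\geq \min\{\delta_1,\delta_1+\delta_2-\kappa_2\}]]_2.
\]
Substituting $\kappa_1=k_1+r_1$, $\kappa_2=k_2+r_2$, $\delta_1=d_1$, and $\delta_2=d_2$ yields a pure stabilizer code with parameters $[[n_1+n_2-k_2-r_2,\;k_1+r_1,\;\geq d]]_2$ where $d\geq \min\{d_1,d_1+d_2-k_2-r_2\}$.

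Third, I would apply Corollary~\ref{cor:generic} once more, but in the opposite direction: the resulting pure $[[n_1+n_2-k_2-r_2,k_1+r_1,d]]_2$ stabilizer code can be turned into subsystem codes with parameters $[[n_1+n_2-k_2-r_2,\;k_1+r_1-r,\;r,\;\geq d]]_2$ for every $r$ in the range $0\le r<k_1+r_1$, preserving the minimum distance bound. This produces exactly the family claimed in the statement.

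The main obstacle lies in the combining step in the middle. At the classical level, it requires exhibiting a symplectic self-orthogonal additive code on $n_1+n_2-k_2-r_2$ coordinates formed by embedding the stabilizer of $Q_2$ into a suitable subset of coordinates of $Q_1$ (gluing along the $k_2+r_2$ dimensions that would become encoded qubits of $Q_2$), together with a careful weight analysis to verify the $\min\{d_1,d_1+d_2-k_2-r_2\}$ lower bound. Concretely, I would split any nonzero vector in the symplectic dual of the combined code into its projections onto the two component blocks and argue by cases: if the projection onto the second block is zero, its weight is bounded below by $d_1$; otherwise, the projection onto the second block carries symplectic weight at least $d_2$, while the glue forces the first block to lose at most $k_2+r_2$ weight compared to $d_1$, yielding the $d_1+d_2-k_2-r_2$ bound. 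Purity follows automatically from purity of $Q_1$ and $Q_2$, so the final invocation of Corollary~\ref{cor:generic} is legitimate.
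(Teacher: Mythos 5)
Your proof follows essentially the same route as the paper: purity converts each subsystem code into a stabilizer code with parameters $[[n_i,k_i+r_i,d_i]]_2$ (Theorem~\ref{th:shrinkR}/Corollary~\ref{cor:generic}), the two stabilizer codes are combined via the construction of \cite[Theorem~8]{calderbank98}, and Theorem~\ref{th:shrinkK} then trades subsystem dimension for co-subsystem dimension to produce the claimed family. The only quibble is your assertion that purity of the combined stabilizer code is needed to justify the final step: it is not, since the stabilizer-to-subsystem direction requires no purity hypothesis (and the combining construction need not preserve purity in any case).
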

\begin{proof}
Since there exist pure $[[n_1,k_1,r_1,d_1]]_2$ and
$[[n_2,k_2,r_2,d_2]]_2$ subsystem codes with $k_2+r_2\leq n_1$, it follows
from Theorem~\ref{th:shrinkR} that there exist stabilizer codes with
the parameters $[[n_1,k_1+r_1,d_1]]_2$ and $[[n_2,k_2+r_2,d_2]]_2$
such that $k_2+r_2\leq n_1$. Therefore, there exists an
$[[n_1+n_2-k_2-r_2,k_1+r_1,d]]_2$ stabilizer code with minimum distance
$$d \geq \min
\{d_1,d_1+d_2-k_2-r_2\}$$ by \cite[Theorem 8]{calderbank98}.
It follows from Theorem~\ref{th:shrinkK} that there exists
$[[n_1+n_2-k_2-r_2,k_1+r_1-r,r,\geq d]]_2$ subsystem codes for all
$r$ in the range $0\le r< k_1+r_1$.
\end{proof}

\begin{theorem}\label{lem:twocodes_nk1r1s1k2r2d2} Let $Q_1$ and $Q_2$ be two
pure subsystem codes with parameters $[[n,k_1,r_1,d_1]]_q$ and
$[[n,k_2,r_2,d_2]]_q$, respectively. If $Q_2\subseteq Q_1$, then
there exists pure subsystem codes with parameters
$$[[2n,k_1+k_2+r_1+r_2-r,r,d]]_q$$
for all $r$ in the range $0\le r\le k_1+k_2+r_1+r_2$, where
the minimum distance $d \geq \min \{d_1,2d_2\}$.
\end{theorem}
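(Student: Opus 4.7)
The plan is to reduce the problem to the pure stabilizer code case, apply the classical $(u\,|\,u+v)$ doubling construction there, and then trade dimensions back to obtain the desired one-parameter family. The strategy mirrors the proof of Theorem~\ref{thm:twocodes_n1k1r1d1n2k2r2d2}, but the bookkeeping for both $k$ and $r$ requires a little more care because we are combining two genuine subsystem codes rather than ultimately landing in the stabilizer setting.

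First, since $Q_1$ and $Q_2$ are pure, the stabilizer direction of Corollary~\ref{cor:generic} yields pure stabilizer codes $S_1$ and $S_2$ with parameters $[[n,k_1+r_1,d_1]]_q$ and $[[n,k_2+r_2,d_2]]_q$, respectively. I interpret the hypothesis $Q_2 \subseteq Q_1$ at the level of the classical additive codes associated with $Q_1$ and $Q_2$ via Theorem~\ref{th:oqecfq}; under this reading the trading procedure of Theorem~\ref{th:shrinkR} can be carried out simultaneously on the two codes using a common symplectic basis of $\F_q^{2n}$, and the resulting stabilizer codes then inherit the inclusion $S_2 \subseteq S_1$.

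Second, I apply the classical $(u\,|\,u+v)$ doubling construction for stabilizer codes, that is \cite[Theorem~9]{calderbank98}, which is the same tool invoked in the proof of Theorem~\ref{thm:twocodes_n1k1r1d1n2k2r2d2}. Applied to the pair $S_2 \subseteq S_1$, this yields a pure stabilizer code $S$ on $2n$ qudits with parameters $[[2n,\,k_1+k_2+r_1+r_2,\,d]]_q$ and minimum distance $d \ge \min\{d_1,\,2d_2\}$.

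Finally, I invoke the subsystem direction of Corollary~\ref{cor:generic} applied to the pure stabilizer code $S$: for every $r$ with $0 \le r \le k_1+k_2+r_1+r_2$ there exists a pure subsystem code with parameters $[[2n,\,k_1+k_2+r_1+r_2-r,\,r,\,\ge d]]_q$, which is exactly the claimed family. The main obstacle is the first step: one must pin down the interpretation of $Q_2 \subseteq Q_1$ so that the inclusion is preserved by the trading operation of Theorem~\ref{th:shrinkR}; once a common symplectic basis is fixed for the two underlying classical codes this is automatic, and the remaining two steps are then immediate applications of existing results.
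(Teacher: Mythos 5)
Your proposal follows essentially the same route as the paper's proof: use purity to pass to pure $[[n,k_i+r_i,d_i]]_q$ stabilizer codes (Theorem~\ref{th:shrinkR}/Corollary~\ref{cor:generic}), combine the nested pair with the doubling construction to get a pure $[[2n,k_1+k_2+r_1+r_2,\ge\min\{d_1,2d_2\}]]_q$ stabilizer code, and then trade dimensions back via Theorem~\ref{th:shrinkK}; your explicit care in carrying the inclusion $Q_2\subseteq Q_1$ over to the stabilizer codes addresses a point the paper silently assumes. The only correction needed is the citation for the doubling step: \cite{calderbank98} covers only the binary/GF(4) case, so for general $q$ you should invoke the nonbinary analogue, \cite[Lemma~74]{ketkar06}, which is what the paper uses.
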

\begin{proof}
By assumption, there exists a pure $[[n,k_i,r_i,d_i]]_q$ subsystem
code, which implies the existence of a pure $[[n,k_i+r_i,d_i]]_q$
stabilizer code by Theorem~\ref{th:shrinkR}, where $i\in\{1,2\}$.
By~\cite[Lemma 74]{ketkar06}, there exists a pure stabilizer code with
parameters $[[2n,k_1+k_2+r_1+r_2,d]]_q$ such that $d \geq \min
\{2d_2,d_1\}$. By Theorem~\ref{th:shrinkK}, there exist a pure
subsystem code with parameters $[[2n,k_1+k_2+r_1+r_2-r,r,d]]_q$ for
all $r$ in the range $0\le r\le k_1+k_2+r_1+r_2$, which proves the claim.
\end{proof}

Further analysis of propagation rules of subsystem code constructions, tables of upper and lower bounds, and short subsystem codes are presented in~\cite{aly08phd}.

\bigskip

\section{Conclusion and Discussion}\label{sec:conclusion}
 Subsystem codes are among
the most versatile tools in quantum error-correction, since they allow one
to combine the passive error-correction found in decoherence free
subspaces and noiseless subsystems with the active error-control methods
of quantum error-correcting codes.  In this paper we demonstrate several
methods of subsystem code constructions over binary and nonbinary fields.
The subclass of Clifford subsystem codes that was studied in this paper is
of particular interest because of the close connection to classical
error-correcting codes. As Theorem~\ref{th:oqecfq} shows, one can derive
from each additive code over $\F_q$ an Clifford subsystem code. This
offers more flexibility than the slightly rigid framework of stabilizer
codes.

We showed that any $\F_q$-linear MDS stabilizer code yields a series of
pure $\F_q$-linear MDS subsystem codes. These codes are known to be
optimal among the $\F_q$-linear Clifford subsystem codes. We conjecture
that the Singleton bound holds in general for subsystem codes. There is
quite some evidence for this fact, as pure Clifford subsystem codes and
$\F_q$-linear Clifford subsystem codes are known to obey this bound. 
We
have established a number of subsystem code constructions. In particular,
we have shown how one can derive subsystem codes from stabilizer codes. In
combination with the propagation rules that we have derived, one can
easily create tables with the best known subsystem codes. Further
propagation rules and examples of such tables are given in~\cite{aly08phd}, and will appear in an expanded
version of this paper.

\bigskip

\section{ACKNOWLEDGMENTS} This research was supported by NSF grant
CCF-0622201 and NSF CAREER award CCF-0347310. Part of this paper is appeared in Proceedings of 2008 IEEE International Symposium on Information Theory, ISIT'08, Toronto, CA, July 2008.

\bigskip

\scriptsize
\newcommand{\XXstud}{{}}
\newcommand{\XXar}[1]{}
\bibliographystyle{plain}

\begin{thebibliography}{22}



\bibitem{aly08a}
S.~A. Aly and A.~Klappenecker.
\newblock Subsystem code constructions.
\newblock In {\em Proc. 2008 IEEE International Symposium on Information
  Theory, Toronto, Canada}, pages 369--373, July 2008.
  
  \bibitem{aly08phd}
S.~A. Aly.
\newblock Quantum error control codes.
\newblock {\em {Ph.D} {D}issertation, Texas {A\&M} University}, January 2008.

\bibitem{aly06c}
S.~A. Aly, A.~Klappenecker, and P.~K. Sarvepalli.
\newblock Subsystem codes.
\newblock In {\em 44th Annual Allerton Conference on Communication, Control,
  and Computing, Monticello, Illinois, September, 2006}, 2006.

\bibitem{aly07a}
S.~A. Aly, A.~Klappenecker, and P.~K. Sarvepalli.
\newblock On quantum and classical {BCH} codes.
\newblock {\em IEEE Trans. Inform. Theory}, 53(2):1183--1188, 2007.

\bibitem{aly06a}
S.~A. Aly, A.~Klappenecker, and P.~K. Sarvepalli.
\newblock Primitive quatnum {BCH} codes over finite fields.
\newblock In {\em Proc. 2006 IEEE International Symposium on Information
  Theory, Seattle, USA}, pages 1114 -- 1118, July 2006.

\bibitem{bacon06}
D.~Bacon.
\newblock Operator quantum error correcting subsystems for self-correcting
  quantum memories.
\newblock {\em Phys. Rev.~A}, 73(012340), 2006.

\bibitem{calderbank98}
A.R. Calderbank, E.M. Rains, P.W. Shor, and N.J.A. Sloane.
\newblock Quantum error correction via codes over {GF}(4).
\newblock {\em IEEE Trans. Inform. Theory}, 44:1369--1387, 1998.

\bibitem{cohn05}
P.M. Cohn.
\newblock {\em Basic Algebra -- Groups, Rings, and Fields}.
\newblock Springer, 2005.

\bibitem{grassl04}
M.~Grassl, T.~Beth, and M.~R{\"o}tteler.
\newblock On optimal quantum codes.
\newblock {\em Internat. J. Quantum Information}, 2(1):757--775, 2004.

\bibitem{huffman03}
W.~C. Huffman and V.~Pless.
\newblock {\em Fundamentals of Error-Correcting Codes}.
\newblock University Press, Cambridge, 2003.

\bibitem{ketkar06}
A.~Ketkar, A.~Klappenecker, S.~Kumar, and P.~K. Sarvepalli.
\newblock Nonbinary stabilizer codes over finite fields.
\newblock {\em IEEE Trans. Inform. Theory}, 52(11):4892--4914, 2006.

\bibitem{pre0608}
A.~Klappenecker and P.~K. Sarvepalli.
\newblock Clifford code constructions of operator quantum error correcting
  codes.
\newblock arXiv:quant-ph/0604161, 2006.

\bibitem{pre0703}
A.~Klappenecker and P.~K. Sarvepalli.
\newblock On subsystem codes beating the {H}amming or {S}ingleton bound.
\newblock {\em Proc. Royal Soc. Series~A}, 463(2087):2887--2905, 2007.

\bibitem{knill96b}
E.~Knill.
\newblock Group representations, error bases and quantum codes.
\newblock Los Alamos National Laboratory Report LAUR-96-2807, 1996.

\bibitem{knill06}
E.~Knill.
\newblock On protected realizations of quantum information.
\newblock Eprint: quant-ph/0603252, 2006.

\bibitem{kribs05}
D.~W. Kribs, R.~Laflamme, and D.~Poulin.
\newblock Unified and generalized approach to quantum error correction.
\newblock {\em Phys. Rev. Lett.}, 94(180501), 2005.

\bibitem{kribs05b}
D.~W. Kribs, R.~Laflamme, D.~Poulin, and M.~Lesosky.
\newblock Operator quantum error correction.
\newblock Eprint: quant-ph/0504189, 2005.

\bibitem{macwilliams77}
F.~J. MacWilliams and N.J.A. Sloane.
\newblock {\em The Theory of Error-Correcting Codes}.
\newblock North-Holland, 1977.

\bibitem{poulin05}
D.~Poulin.
\newblock Stabilizer formalism for operator quantum error correction.
\newblock {\em Phys. Rev. Lett.}, 95(230504), 2005.

\bibitem{rains99}
E.~M. Rains.
\newblock Nonbinary quantum codes.
\newblock {\em IEEE Trans. Inform. Theory}, 45:1827--1832, 1999.

\bibitem{klappenecker050}
P.~Sarvepalli and A.~Klappenecker.
\newblock Nonbinary quantum {R}eed-{M}uller codes.
\newblock In {\em Proc. 2005 IEEE International Symposium on Information
  Theory, Adelaide, Australia}, pages 1023--1027, 2005.

\bibitem{shor95}
P.~Shor.
\newblock Scheme for reducing decoherence in quantum memory.
\newblock {\em Phys. Rev. A}, 2:2493--2496, 1995.

\end{thebibliography}
%
%

%
%

\bigskip

\appendix We recall that the Hermitian construction of stabilizer
codes yields $\F_q$-linear stabilizer codes, as can be seen from the
following reformulation of~\cite[Corollary~2]{grassl04}.
\begin{lemma}[\cite{grassl04}]\label{l:hermitian-linear}
If there exists an $\F_{q^2}$-linear code $X\subseteq \F_{q^2}^n$ such
that $X\subseteq X^\hdual$, then there exists an $\F_q$-linear code
$C\subseteq \F_q^{2n}$ such that $C\subseteq C^\sdual$, $|C|=|X|$,
$\swt(C^\sdual - C)=\wt(X^\hdual - X)$ and $\swt(C)=\wt(X)$.
\end{lemma}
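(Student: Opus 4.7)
The plan is to construct $C$ as the image of $X$ under a weight-preserving $\F_q$-linear isomorphism $\phi\colon \F_{q^2}^n \to \F_q^{2n}$ that sends the Hermitian pairing to the trace-symplectic pairing. Fix any $\beta \in \F_{q^2} \setminus \F_q$, so that $\{1,\beta\}$ is an $\F_q$-basis of $\F_{q^2}$, and write each coordinate $x_i \in \F_{q^2}$ uniquely as $x_i = a_i + \beta b_i$ with $a_i,b_i \in \F_q$. Define
$$\phi(x) = (a_1,\dots,a_n \mid b_1,\dots,b_n) \in \F_q^{2n}.$$
Then $\phi$ is an $\F_q$-linear bijection, so $C := \phi(X)$ is $\F_q$-linear with $|C| = |X|$. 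Coordinate-wise, $x_i \neq 0$ iff $(a_i,b_i) \neq (0,0)$, so $\wt(x) = \swt(\phi(x))$ for every $x$, and in particular $\swt(C) = \wt(X)$.

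The heart of the proof is the inner-product translation. Set $t = \beta + \beta^q \in \F_q$ and $N = \beta^{q+1} \in \F_q$, so that $\beta^q = t - \beta$. For $x = a + \beta b$ and $y = a' + \beta b'$ in $\F_{q^2}^n$, expanding $\langle x,y\rangle_h = \sum_i x_i^q y_i$ and collecting with respect to the basis $\{1,\beta\}$ produces
$$\langle x, y\rangle_h \;=\; \bigl(a\cdot a' + t\,(b\cdot a') + N\,(b\cdot b')\bigr) \;+\; \beta\,\bigl(a\cdot b' - b\cdot a'\bigr).$$
The hypothesis $X \subseteq X^\hdual$ means $\langle x,y\rangle_h = 0 \in \F_{q^2}$ for all $x,y \in X$, which forces both $\F_q$-coefficients above to vanish. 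In particular $a\cdot b' - a'\cdot b = 0$ in $\F_q$, so $\langle \phi(x)\mid\phi(y)\rangle_s = \tr_{q/p}(a'\cdot b - a\cdot b') = 0$, and hence $C \subseteq C^\sdual$.

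To obtain the distance statement I identify $C^\sdual$ with $\phi(X^\hdual)$. The same computation applied to a pair $(w,y)$ with $w \in X^\hdual$ and $y \in X$ gives $\phi(X^\hdual) \subseteq C^\sdual$. For equality I count: writing $q = p^m$ and $|X| = q^{2k}$, the trace-symplectic form is an $\F_p$-nondegenerate pairing on the $\F_p$-space $\F_q^{2n}$ of dimension $2nm$, so $|C^\sdual| = p^{2nm}/|C|$, while $\F_{q^2}$-linearity of $X$ gives $|X^\hdual| = q^{2(n-k)}$; one checks $p^{2nm}/|C| = q^{2(n-k)} = |\phi(X^\hdual)|$, forcing $C^\sdual = \phi(X^\hdual)$. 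Since $\phi$ is a bijection, $C^\sdual \setminus C = \phi(X^\hdual \setminus X)$, and coordinate-wise weight preservation yields $\swt(C^\sdual - C) = \wt(X^\hdual - X)$.

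The only genuinely computational step is the expansion of $\langle x,y\rangle_h$ in terms of the basis $\{1,\beta\}$; the main subtlety is that we need the vanishing of the entire $\F_{q^2}$-valued inner product (not merely of its trace) in order to read off $a\cdot b' - a'\cdot b = 0$ in $\F_q$, which is what the definition of $\hdual$ together with the $\F_{q^2}$-linearity of $X$ supplies for free.
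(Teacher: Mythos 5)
Your proof is correct and follows essentially the same route as the paper: expand $\F_{q^2}$ over the $\F_q$-basis $\{1,\beta\}$, read off that the $\beta$-component of the Hermitian product is the symplectic form (so $X\subseteq X^\hdual$ gives $C\subseteq C^\sdual$), and identify $C^\sdual$ with the expansion of $X^\hdual$ by a cardinality/dimension count, with weights preserved coordinatewise. Your write-up merely makes explicit the counting step that the paper compresses into ``equality by a dimension argument.''
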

\begin{proof}
Let $\{1,\beta\}$ be a basis of $\F_{q^2}/\F_q$. Then
$\tr_{q^2/q}(\beta)=\beta+\beta^q$ is an element $\beta_0$ of $\F_q$; hence,
$\beta^q=-\beta+\beta_0$. Let $$C=\{ (u|v)\,|\, u,v\in \F_q^n, u+\beta v\in
X\}.$$ It follows from this definition that $|X|=|C|$ and that
$\wt(X)=\swt(C)$. Furthermore, if $u+\beta v$ and $u'+\beta v'$ are elements of
$X$ with $u,v,u',v'$ in $\F_q^n$, then
$$
\begin{array}{lcl}
0&=&(u+\beta v)^q\cdot (u'+\beta v') \\
&=& u\cdot u' + \beta^{q+1} v\cdot
v' + \beta_0 v \cdot u' + \beta (u\cdot v' -v \cdot u').
\end{array}
$$ On the right hand side, all terms but the last are in $\F_q$; hence
we must have $(u\cdot v' -v \cdot u')=0$, which shows that $(u|v)
\,\sdual\, (u'|v')$, whence $C\subseteq C^\sdual$. Expanding
$X^\hdual$ in the basis $\{1\,\beta\}$ yields a code $C'\subseteq
C^\sdual$, and we must have equality by a dimension argument. Since
the basis expansion is isometric, it follows that $$\swt(C^\sdual -
C)=\wt(X^\hdual - X).$$ The $\F_q$-linearity of $C$ is a direct
consequence of the definition of $C$.
\end{proof}

\end{document}